\newtheorem{theorem}{Theorem}[section]
\newtheorem{proposition}{Proposition}[section]
\newtheorem{definition}{Definition}[section]
\newtheorem{remark}{Remark}[section]
\newtheorem{example}{Example}[section]
\begin{document}
	\begin{center}
		\Large \bf On a Bivariate Copula for Modeling Negative Dependence: Application to New York Air Quality Data
	\end{center}
	\begin{center}
\textbf{ Shyamal Ghosh$^{1}$, Prajamitra Bhuyan$^{2,3}$ and Maxim Finkelstein$^{4,5}$ }\\
$^{1}$  Indian Institute of Information Technology, Guwahati, India\\
		$^{2}$ Queen Mary University London, London, United Kingdom \\
		$^{3}$ The Alan Turing Institute, London, United Kingdom\\
		$^{4}$ University of the Free State, Bloemfontein, South Africa\\
		$^{5}$ University of Strathclyde, Glasgow, United Kingdom

	\end{center}
	
\begin{abstract}
In many practical scenarios, including finance, environmental sciences, system reliability, etc., it is often of interest to study the various notion of negative dependence among the observed variables. A new bivariate copula is proposed for modeling negative dependence between two random variables that complies with most of the popular notions of negative dependence reported in the literature. Specifically, the Spearman's rho and the Kendall's tau for the proposed copula have a simple one-parameter form with negative values in the full range. Some important ordering properties comparing the strength of negative dependence with respect to the parameter involved are considered. Simple examples of the corresponding bivariate distributions with popular marginals are presented. Application of the proposed copula is illustrated using a real data set on air quality in the New York City, USA.   
\end{abstract}

\noindent%
{\it Keywords: Air quality, Inference function for margins, Kolmogorov-Smirnov test, Negatively ordered, Negatively quadrant dependent.}

\section{\large Introduction}\label{intro}

Copulas provide an effective tool for modeling dependence in various multivariate phenomena in the fields of reliability engineering, life sciences, environmental science, economics and finance, etc \citep[Ch-7]{Fontaine_et_al_2020, Cooray_2019, Joe_2006}. Specifically, in recent decades, bivariate copulas were used to generate bivariate distributions with suitable dependence properties \citep{Pos2, Pos3, Finkelstein_2003,  Statistics_2012a, Pos1}. The detailed discussion of historical developments, obtained results and perspectives along with the up to date theory can be found in \cite{Durante_2015} and \cite{Mariu_2018}. It should be noted that most copulas available in the literature possess some limitations in modeling negatively dependent data, which is a certain disadvantage, as negative dependence between vital variables is often encountered in real life. 

\cite{Leh} introduced several concepts of negative dependence for bivariate distributions. Later, \citet{Proschan} and \citet{Yan} extended the corresponding definitions and developed stronger notions of bivariate negative dependence. See \citet{Dist} for detailed discussion on popular dependence notions and their applications in the context of continuous bivariate distributions. \cite{Scarsini_Shaked_1996} provided a detailed overview of the corresponding ordering properties for the multivariate distributions. These results provide useful tools for describing the dependence properties of copulas with respect to a dependence parameter. However, only a few bivariate copulas that allow for a simple and meaningful analysis of this kind have been developed and studied in the literature so far.  The Farlie-Gumbel-Morgenstern (FGM) family of distributions exhibits negative dependence, but the Spearman's rho for this family lies within the interval $[-1/3,1/3]$ \citep{Schu}. 
\citet{Four} and \citet{Three} have considered the four-parameter and the three-parameter extensions of the FGM family proposed by \citet{Smir}, with Spearman's rho lying within the interval $[-0.48, 0.50]$, and $[-0.5, 0.43]$, respectively. To address this issue \citet{New} proposed another extension, but its application is limited because of a singular component that is concentrated on the corresponding diagonal. Some other extensions of the FGM copula are discussed in \citet{FGM2} and \citet{FGM1}. \citet{Hurlimann_2015} have proposed a comprehensive extension of the FGM copula with  the  Spearman's rho and Kendall's tau attaining any value in $(-1,1)$. However, the dependence properties and ordering properties of these  copulas are not well studied in the literature. Recently, \cite{Cooray_2019} proposed a new extension of FGM family which exhibits negative dependence among the variables in a very strong sense. However, its Spearman's rho and Kendll's tau are restricted to $[-0.70,0]$, and $[-0.52,0]$, respectively.

Thus, it is quite a challenging problem to construct a flexible bivariate copula with the correlation coefficient that takes any value in the interval $(-1,0)$. Moreover, it is not sufficient just to suggest this type of copula, but it is essential to describe its properties (including relevant stochastic comparisons) especially in the case of strong notions of dependence. In many real life scenarios, paired observations of non-negative variables possess strong negative dependence. For example, rainfall intensity and duration are jointly modeled incorporating their negative dependence for the study of derived flood frequency distribution \citep{Rain}. This paper is motivated by a real case study on air quality for New York Metropolitan area where the joint distribution of the wind speed and ozone level exhibits strong negative dependence (See Section \ref{Case}). We believe that the current study meets to some extent this challenge, as we propose an absolutely continuous negatively dependent copula that satisfies most of the popular notions of negative dependence available in the literature with correlation coefficients in the interval $(-1,0)$. 

The paper is organized as follows. In Section \ref{Copu}, we describe the baseline (for the proposed copula) distribution and discuss some basic properties including conditional distributions and correlation coefficients. Various notions of negative dependence in the context of the proposed copula and ordering properties are considered in Section \ref{NProp}, and Section \ref{OrderP}, respectively. Section \ref{Exam}, provides some examples of negatively dependent standard bivariate distributions. The estimation methodologies are discussed in Section \ref{EST}. In Section \ref{Case}, as an illustration,  we provide a real case study. Finally, some concluding remarks are given in Section \ref{Con}.

\section{The Bivariate Copula}\label{Copu}

\cite{Bhuyan_2020} proposed a negatively dependent bivariate life distribution that possesses nice closed-form expressions for the joint distributions and exhibits various strong notions of negative dependence reported in the literature. Most importantly, the correlation coefficient may take any value in the interval $(-1,0)$. One of the marginal distribution is Exponential and the other belongs to skew log Laplace family \citep{Dixit_Khandeparkar_2017}. We utilize the negative dependence structure inherent in this model and formulate a copula with strong negative dependence. The joint distribution function and the marginal distributions are given by
\begin{equation}\label{Joint}
	H(x,y) = \begin{dcases}
		y^{\lambda} - e^{-\lambda x}+\dfrac{\lambda}{(\lambda+\mu)y^{\mu}}\left[ e^{(\lambda+\mu)x}-y^{\lambda+\mu}\right], & 0<y \leq 1, x>-\log y \\
		1-e^{-\lambda x}-\dfrac{\lambda}{(\lambda+\mu)y^{\mu}}\left[ 1-e^{-(\lambda+\mu)x}\right], & x>0, y>1, 
	\end{dcases}
\end{equation}
and  
$F(x)= 1- e^{-\lambda x}$ for $x>0$, and $G(y) =\dfrac{\mu}{(\lambda+\mu)}y^{\lambda}\mathds{1}(0<y \leq1)+ \left[1-\dfrac{\lambda}{(\lambda+\mu)y^{\mu}}\right]\mathds{1}( y>1)$, respectively, where $\lambda,\mu >0$.
Note that $F(\cdot)$ and $G(\cdot)$ are continuous. We first find the quasi-inverse functions of $F(\cdot)$ and $G(\cdot)$ and `put' those into the arguments of the joint distribution function $H(\cdot,\cdot)$ given by (\ref{Joint}).
Then by Corollary 2.3.7 of \citet[p-22]{Nelson_2006}, we obtain the following copula
\begin{equation}\label{Cop}
	C_{\lambda, \mu}(u,v) = \begin{dcases}
		v-(1-u)+\dfrac{\lambda\mu^{\frac{\mu}{\lambda}}}{(\lambda+\mu)^{1+\frac{\mu}{\lambda}}}(1-u)^{{1+\frac{\mu}{\lambda}}}v^{-\frac{\mu}{\lambda}},\\
		\hspace{100pt} 0<v \leq\dfrac{\mu}{\mu+\lambda}, 1-\dfrac{(\lambda +\mu)v}{\mu}<u<1, \\
		u-(1-v)\left[1-(1-u)^{{1+\frac{\mu}{\lambda}}}\right], \\ \hspace{100pt}  0<u<1, \dfrac{\mu}{\mu+\lambda}<v<1.
	\end{dcases}
\end{equation}
Now using the reparameterization $\mu=\theta \lambda $, in (\ref{Cop}), we rewrite $C_{\lambda, \mu}$ as 
\begin{equation}\label{Copula}
	C_\theta(u,v) = \begin{dcases}
		v-(1-u)+\dfrac{\theta^\theta}{(1+\theta)^{1+\theta}}(1-u)^{1+\theta}v^{-\theta}, \\\hspace{100pt} 0<v \leq\dfrac{\theta}{1+\theta}, 1-\dfrac{(1+\theta)v}{\theta}<u<1 \\
		u-(1-v)\left[1-(1-u)^{1+\theta}\right], \\ \hspace{100pt} 0<u<1, \dfrac{\theta}{1+\theta}<v<1, 
	\end{dcases}
\end{equation}
for $\theta >0$.
It is easy to verify that $C_\theta(u,v)$, given by (\ref{Copula}), satisfies the following conditions: (i) $C_\theta(u,0)=0=C_\theta(0,v)$, (ii) $C_\theta(u,1)=u$, $C_\theta(1,v)=v$, for any $u$, $v$ in $I=[0,1]$, and (iii) $C_\theta(u_2,v_2)-C_\theta(u_2, v_1)-C_\theta(u_1, v_2)+C_\theta(u_1,v_1)\geq 0$, for any $u_1, u_2, v_1, v_2$ in $I$ with $u_1\leq u_2$ and $v_1\leq v_2$. In Figure \ref{Copula_plot}-\ref{Cont_plot}, we provide graphical presentation of the proposed copula for different values of the dependence parameter $\theta$.
\begin{figure}[htp]
	\centering
	\subfigure [Copula plot for $\theta=0.1$] {
		\includegraphics[scale=0.5]{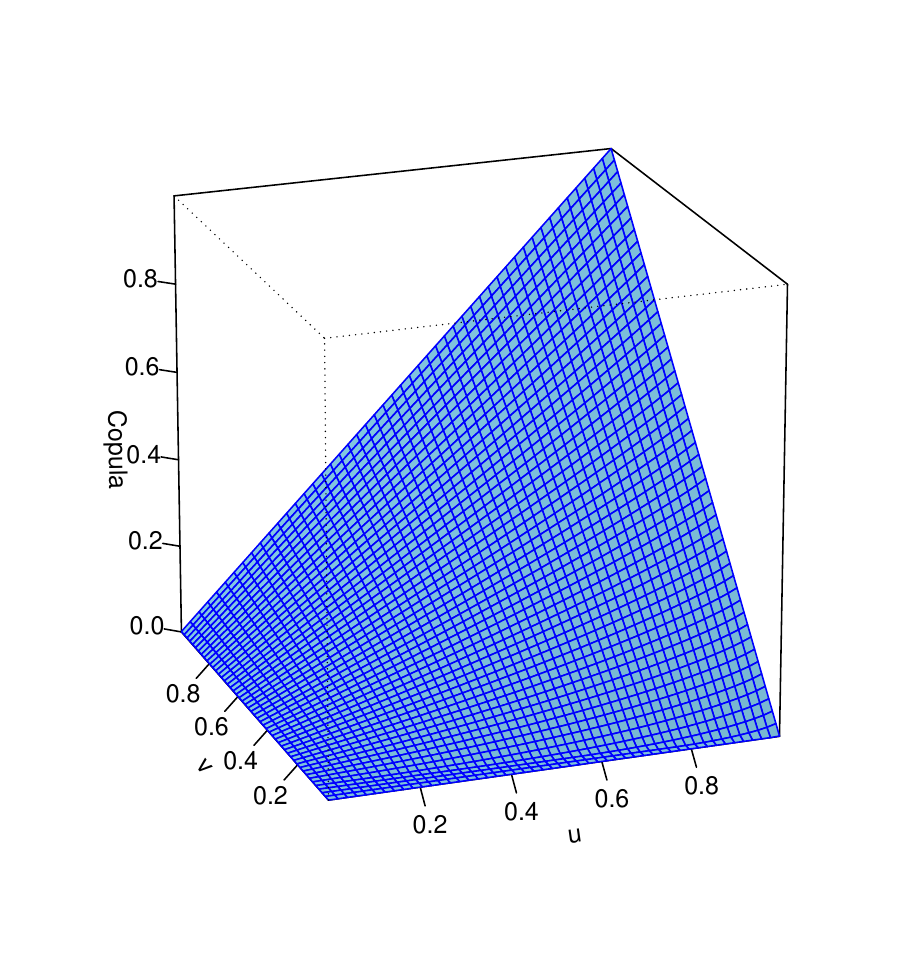}
		\label{fig:subfig01}
	}
	\subfigure[Copula plot for $\theta=1$]{
		\includegraphics[scale=0.5]{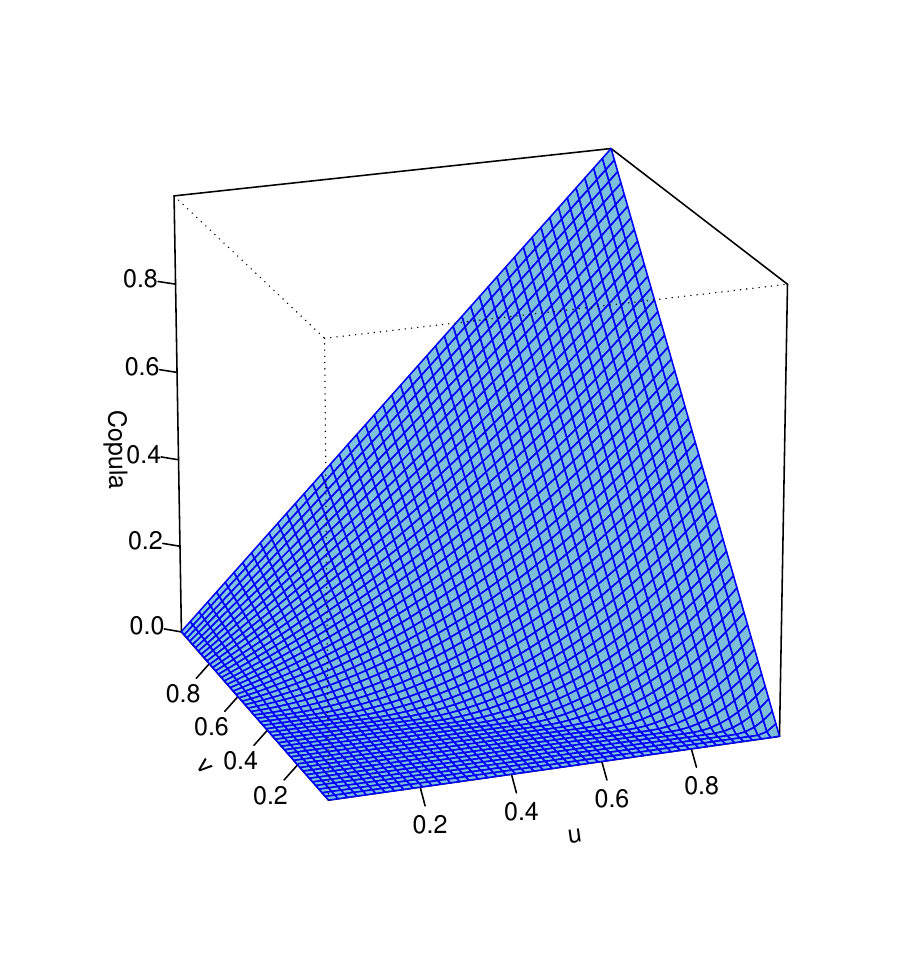}
		\label{fig:subfig02}
	}
	\subfigure[Copula plot for $\theta=5$]{
		\includegraphics[scale=0.5]{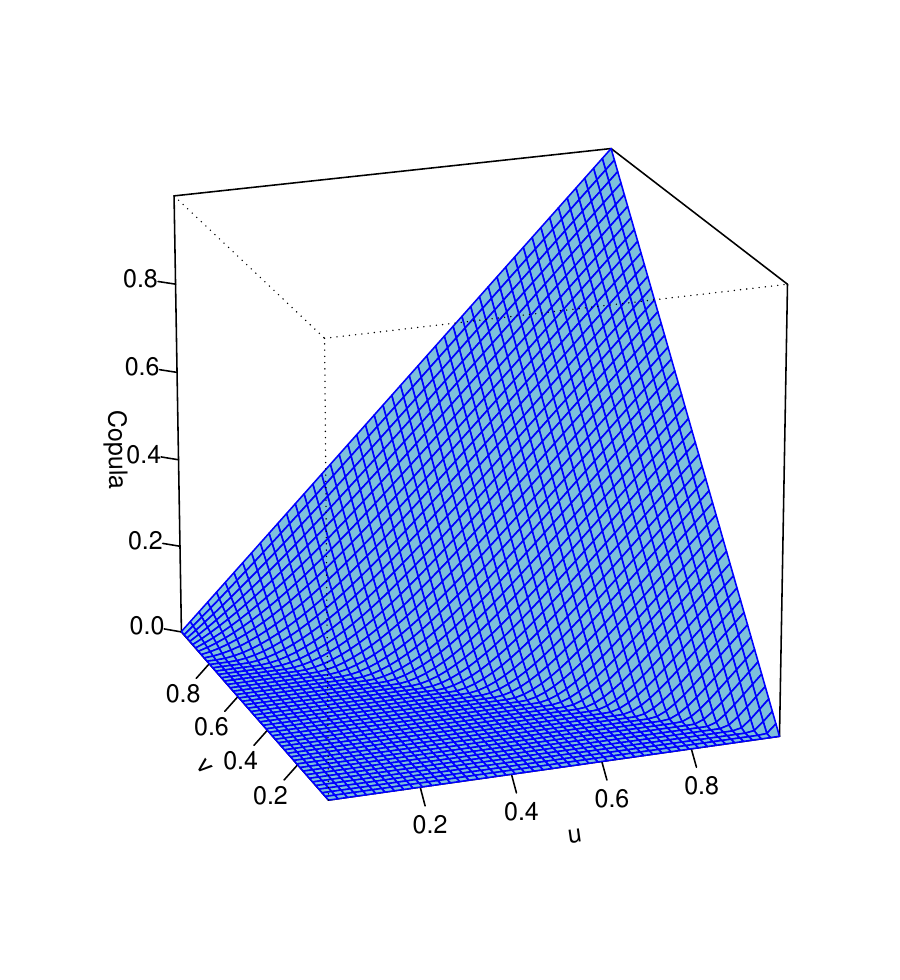}
		
	}
	\subfigure[Copula plot for $\theta=10$]{
		\includegraphics[scale=0.5]{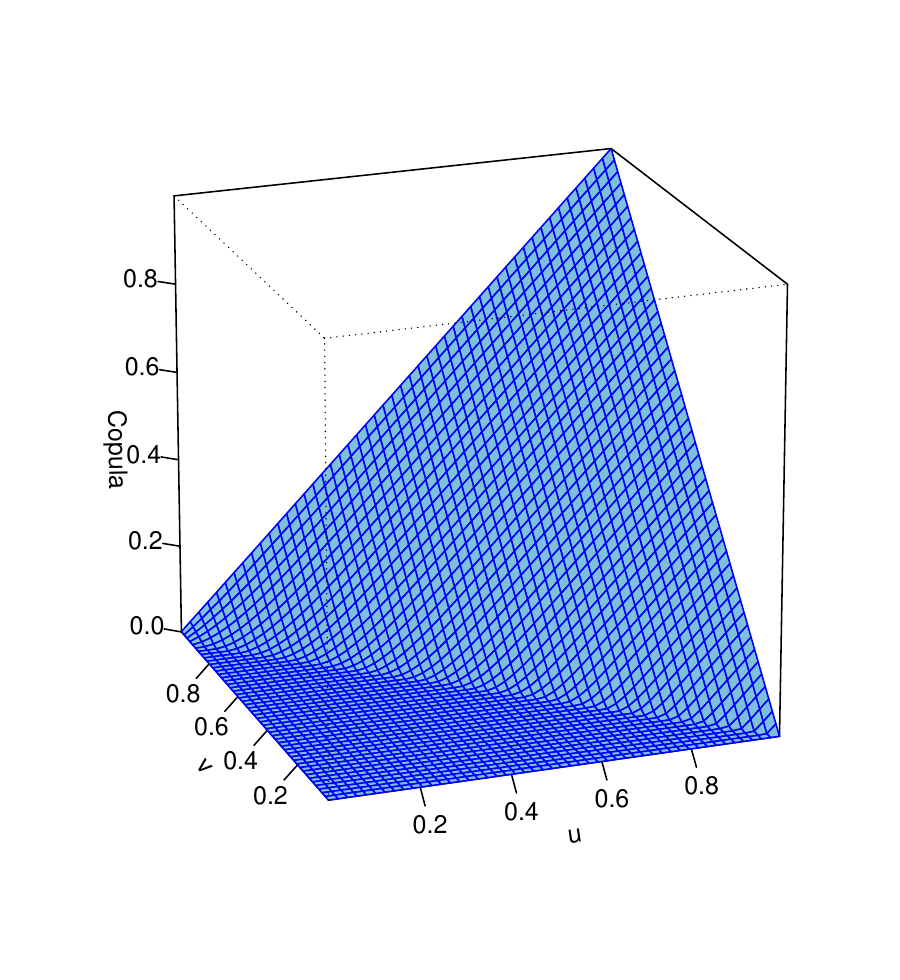}
		\label{fig:subfig03}
	}
	\caption{ Graphical plots of $C_{\theta}$ for different choices of $\theta$ on an unit square.}
	\label{Copula_plot}
\end{figure}

\begin{figure}[htp]
	\centering
	\subfigure [Contour plot for $\theta=0.1$] {
		\includegraphics[scale=0.5]{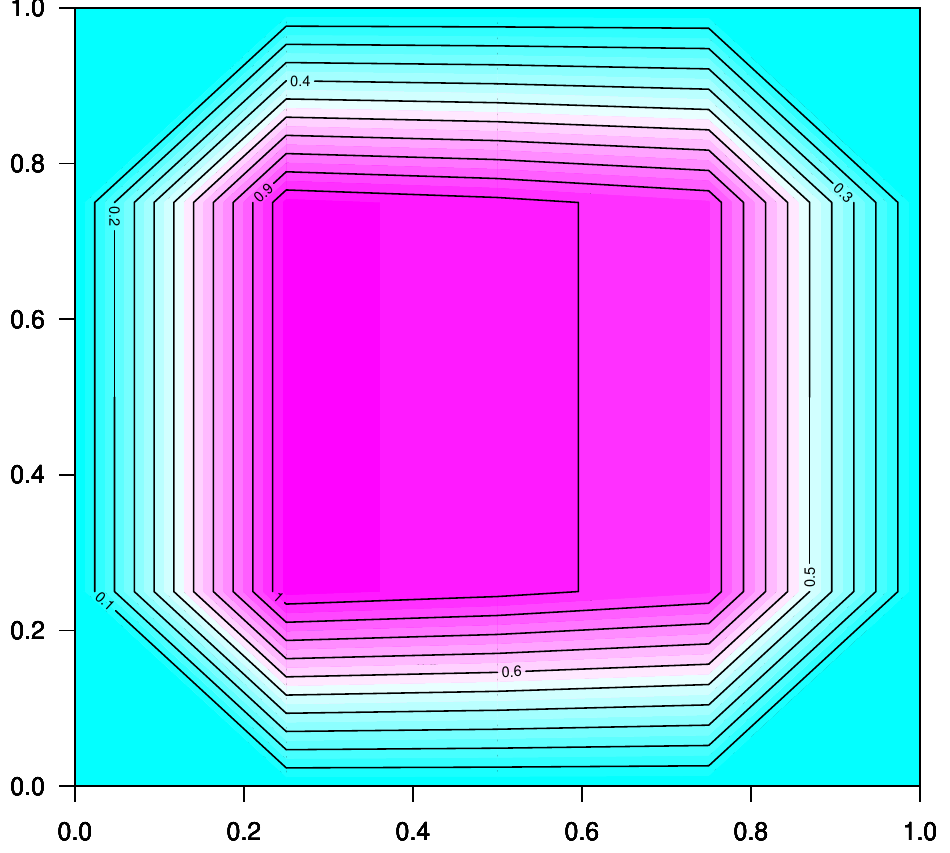}
		\label{fig:subfig1}
	}
	\subfigure[Contour plot for $\theta=1$]{
		\includegraphics[scale=0.5]{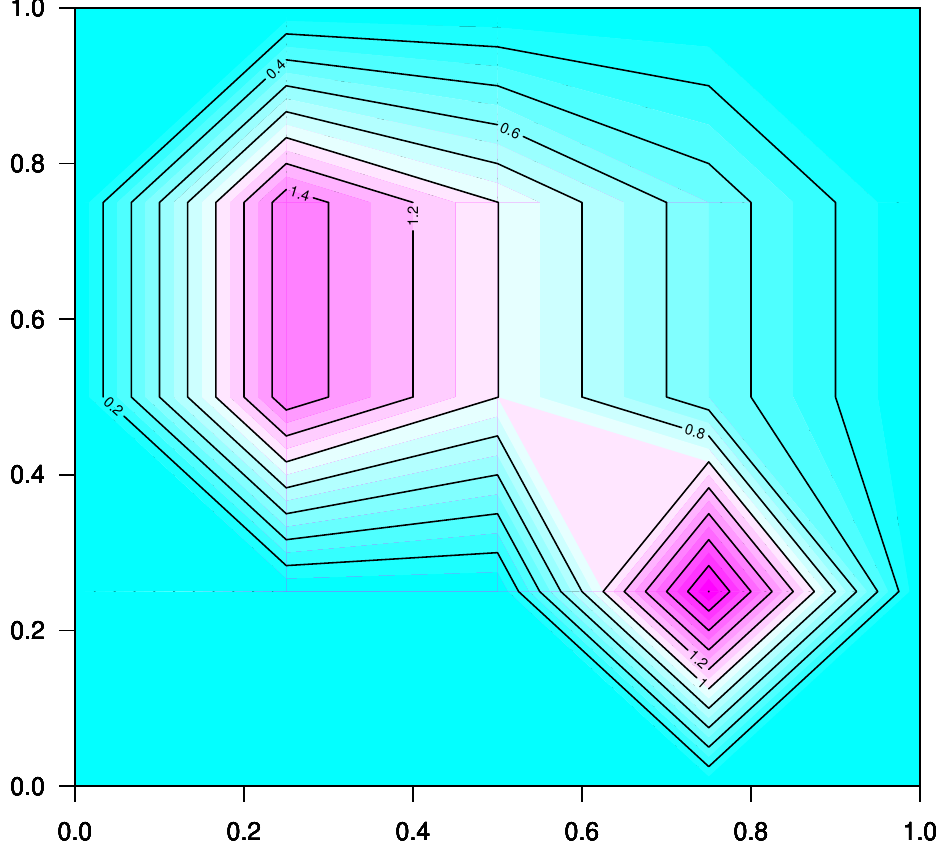}
		\label{fig:subfig2}
	}
	\subfigure[Contour plot for $\theta=5$]{
		\includegraphics[scale=0.5]{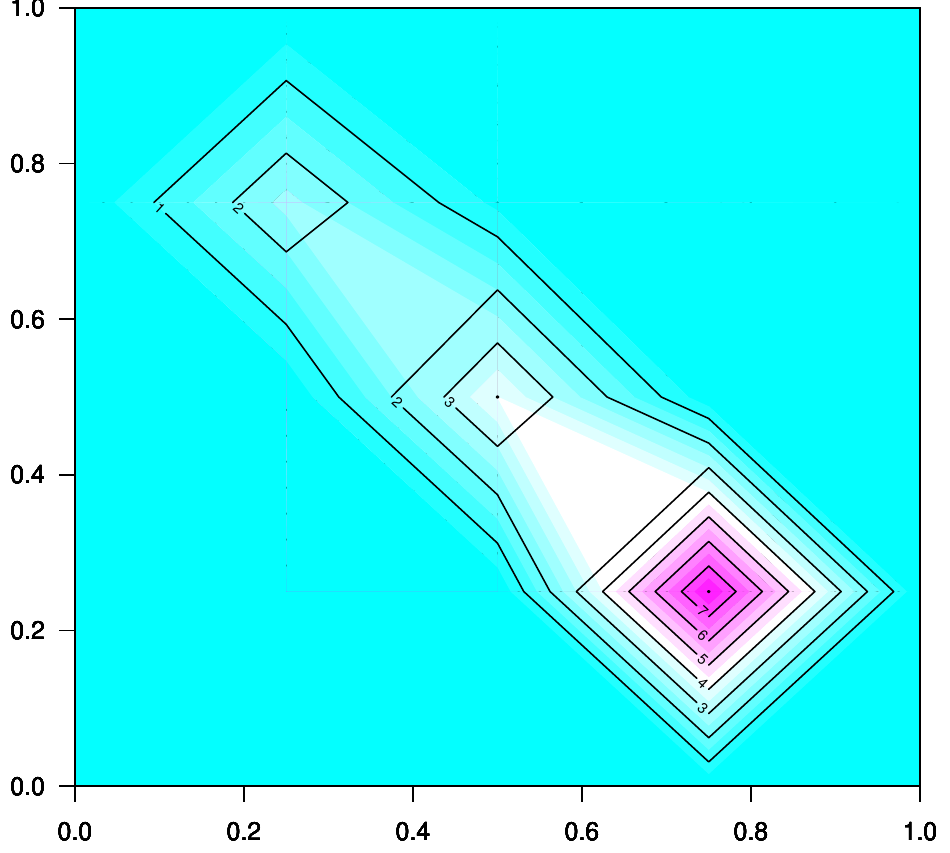}
		
	}
	\subfigure[Contour plot for $\theta=10$]{
		\includegraphics[scale=0.5]{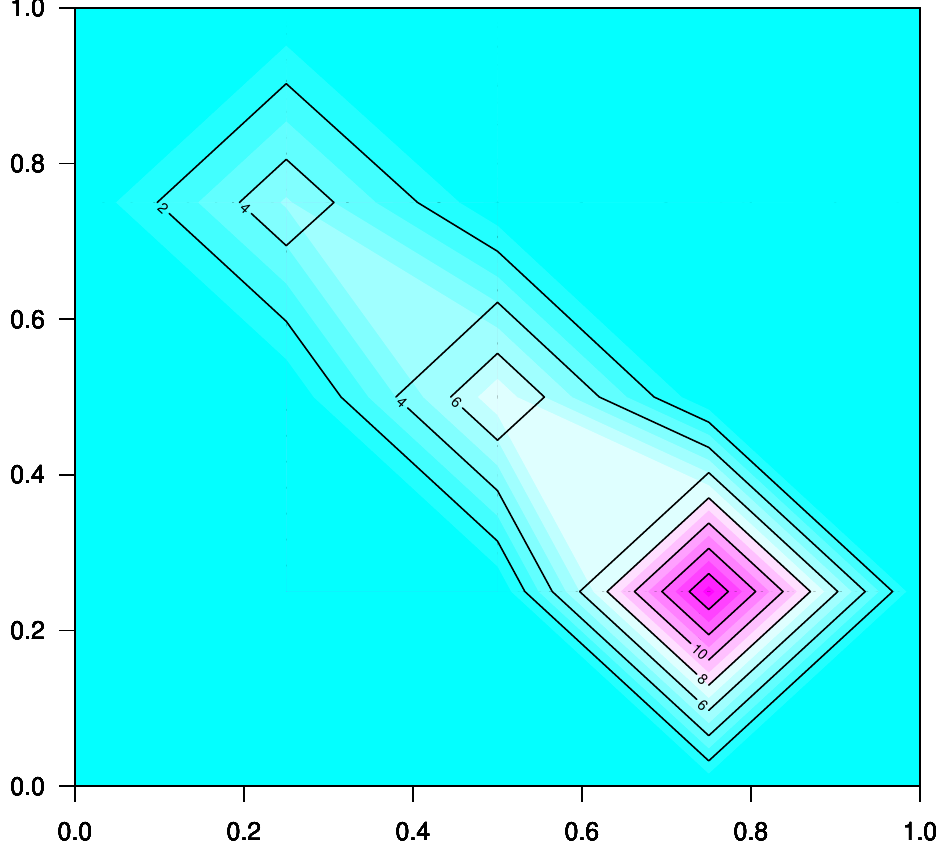}
		\label{fig:subfig3}
	}
	\caption{Contour plots of $C_{\theta}$ for different choices of $\theta$.}
	\label{Cont_plot}
\end{figure}

The survival copula, is the function $\bar{C}$ which couples the joint survival function to its marginal survival functions. It is easy to show that $\bar{C}$ is a copula, and is related to the copula $C$ via the equation $\bar{C}= u+v-1 + C(1-u, 1-v)$. See \citet[p-32]{Nelson_2006} for details. The survival copula and the density function of the proposed copula $C_\theta(u,v)$ are given by 
\begin{equation}
	\bar{C}_\theta(u,v) = \begin{dcases}
		\dfrac{\theta^\theta}{(1+\theta)^{1+\theta}}u^{1+\theta}(1-v)^{-\theta}, &  \dfrac{1}{1+\theta}\leq v < 1, 0<u<\dfrac{(1+\theta)}{\theta}(1-v) \nonumber\\
		vu^{(1+\theta)},  & 0<u<1, 0<v<\dfrac{1}{1+\theta}. 
	\end{dcases}
\end{equation}
and
\begin{equation}\label{eqn_den}
	c_\theta(u,v) = \begin{dcases}
		\dfrac{\theta^{1+\theta}}{(1+\theta)^\theta}(1-u)^\theta v^{-(1+\theta)}, & 0<v \leq\dfrac{\theta}{1+\theta}, 1-\dfrac{(1+\theta)v}{\theta}<u<1 \\
		(1+\theta)(1-u)^\theta, & 0<u<1, \dfrac{\theta}{1+\theta}<v<1,
	\end{dcases}
\end{equation}
respectively.

\subsection{Conditional Copulas}
The conditional copula of $U$ given $V=v$, is as follows.
For $0<v \leq\frac{\theta}{(1+\theta)}$,
\begin{equation}\label{U_V_1}
	C_\theta (u\mid v)= 
	1- \frac{\theta^{(1+\theta)}}{(1+\theta)^{(1+\theta)}}(1-u)^{(1+\theta)} v^{-(1+\theta)},  \,\,\,\,1-\dfrac{(1+\theta)v}{\theta}<u<1, \end{equation}
whereas for $\frac{\theta}{(1+\theta)}<v<1$,
\begin{equation}\label{U_V_2}
	C_\theta(u\mid v)=  
	1-(1-u)^{(1+\theta)},
	\,\,\,\,\,0<u<1.
\end{equation}

The conditional mean and variance of $U\mid V=v$ are given by
\begin{equation}
	E[U\mid V=v]= \begin{dcases}
		1-\dfrac{(1+\theta)^2v}{\theta(\theta+2)}, & 0<v \leq\dfrac{\theta}{1+\theta} \nonumber\\
		\dfrac{1}{\theta+2}, &  
		\dfrac{\theta}{1+\theta}<v<1, 
	\end{dcases}
\end{equation}
and
\begin{equation}
	Var[U\mid V=v]= \begin{dcases}
		\dfrac{(1+\theta)^3v^2}{\theta^2(\theta +2)^2(\theta +3)},& 0<v \leq\dfrac{\theta}{1+\theta} \nonumber\\
		\dfrac{\theta+1}{(\theta+2)^2(\theta+3)},& \dfrac{\theta}{1+\theta}<v<1, 
	\end{dcases}
\end{equation}
respectively.

\begin{remark}
	{\rm The regression of $U$ on $V=v$ is linearly decreasing in $v$ for $0<v\leq \frac{\theta}{\theta+1}$, and independent of $v$ for $\frac{\theta}{\theta+1}<v<1$. Also, it is interesting to note that the conditional variance of $U\mid V=v$ is an increasing function of $v$ and bounded from above by $\frac{\theta+1}{(\theta+2)^2(\theta+3)}.$}
\end{remark}

The conditional copula of $V$ given $U=u$, is given by
\begin{equation}\label{V_U}
	C_\theta(v\mid u)= \begin{dcases}
		1- \dfrac{\theta^\theta}{(1+\theta)^\theta}(1-u)^\theta v^{-\theta},  & \dfrac{(1-u)\theta}{(1+\theta)}<v \leq\dfrac{\theta}{1+\theta} \\
		1-(1+\theta)(1-v)(1-u)^\theta, &  \dfrac{\theta}{1+\theta}<v<1 
	\end{dcases}
\end{equation}
The conditional mean  and variance of $V\mid U=u$, are given by  \begin{equation}E[V\mid U=u]=\dfrac{(1-u)^\theta}{2(1-\theta)}-\dfrac{2\theta^2(1-u)}{1-\theta^2}\nonumber,
\end{equation}
for $\theta \neq 1$, and 
\begin{eqnarray*}
	Var[V\mid U=u]&=&-\dfrac{(1+\theta)(1-u)^\theta\left[ 2 - \theta + \theta^2 (2 - 6 u) + 3 \theta^3 u\right]}{3(\theta-2)(\theta^2-1)^2}+\dfrac{\theta^3(1-u)^2}{(\theta-2)(\theta^2-1)^2}\\
	&-&\dfrac{(1+\theta)^2(1-u)^{2\theta}}{4(\theta^2-1)^2}
\end{eqnarray*}
for $\theta \neq 1,2$, respectively. 
\begin{remark}
	{\rm The regression of $V$ on $U=u$ is strictly decreasing in $u$.}  
\end{remark}

One can use the conditional copula of $U$ given $V=v$, provided in (\ref{U_V_1}) and (\ref{U_V_2}), to simulate from the proposed copula $C_{\theta}$, given by (\ref{Copula}), using the following steps.
\begin{itemize}
	\item[Step I.] Simulate $v_i$ and $u_i^{\ast}$ independently from standard uniform distribution.
	\item[Step II.] If $v_i\le \frac{\theta}{\theta+1}$, then solving $C_{\theta}(u\mid v_i)=u_i^{\ast}$ from (\ref{U_V_1}), we get $u_i=1-(\frac{\theta+1}{\theta})v_i(1-u_i^{\ast})^{\frac{1}{1+\theta}}$;\\
	else, solving $C_{\theta}(u\mid v_i)=u_i^{\ast}$ from (\ref{U_V_2}), we get $u_i=1-(1-u_i^{\ast})^{\frac{1}{1+\theta}}$.
	\item[Step III.] Repeat Step I and Step II $n$ times to obtain independently and identically distributed realizations $(u_i,v_i)$, for $i=1,2,\ldots,n$  from $C_{\theta}$.
\end{itemize}
A similar algorithm can be elaborated to simulate from $C_{\theta}$ based on the conditional copula of $V$ given $U$, provided in (\ref{V_U}).The associated R programme for the aforementioned algorithm are provided in the Supplementary material. The Scatter plots based on $500$ simulated observations using the aforementioned algorithm for four different values of $\theta$ are given in Figure \ref{Scatter1}. As expected, the data points are getting closer to the diagonal $v=-u$ for higher values of $\theta$.

\begin{figure}[htp]
	\centering
	\subfigure [Scatter plot for $\theta=0.1$] {
		\includegraphics[scale=0.5]{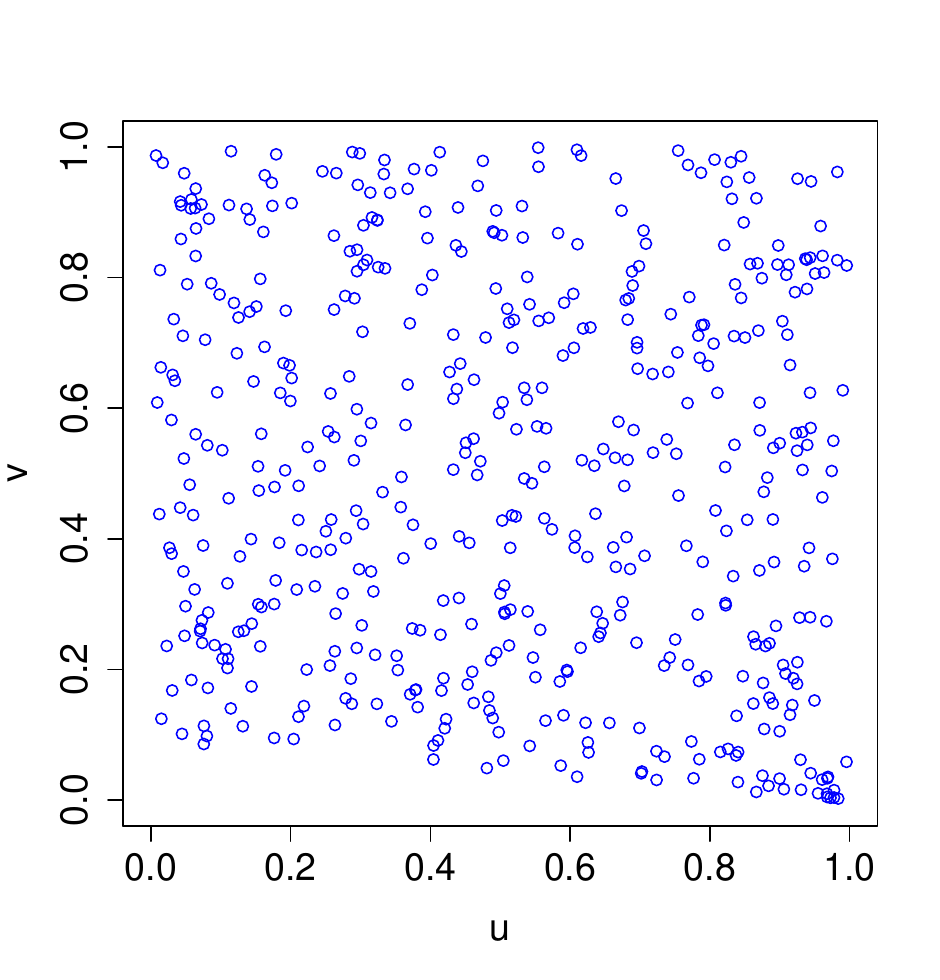}
		\label{fig:subfig001}
	}
	\subfigure[Scatter plot for $\theta=1$]{
		\includegraphics[scale=0.5]{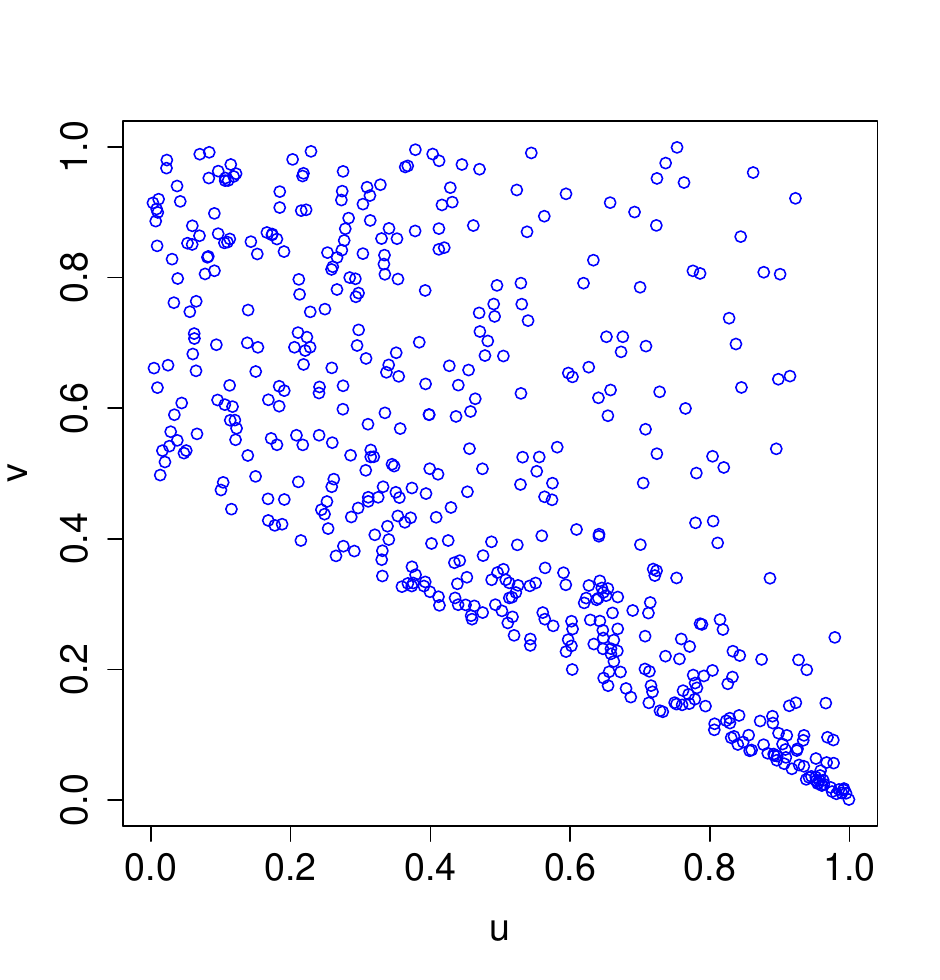}
		\label{fig:subfig002}
	}
	\subfigure[Scatter plot for $\theta=5$]{
		\includegraphics[scale=0.5]{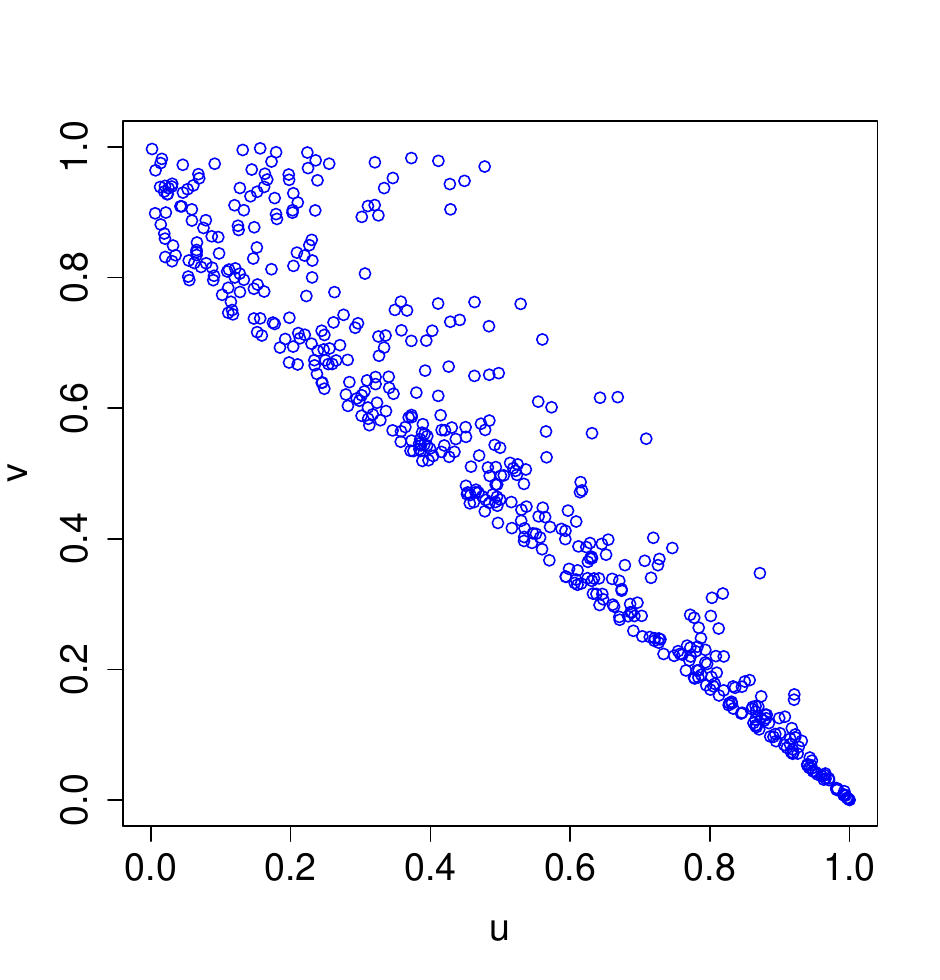}
		
	}
	\subfigure[Scatter plot for $\theta=10$]{
		\includegraphics[scale=0.5]{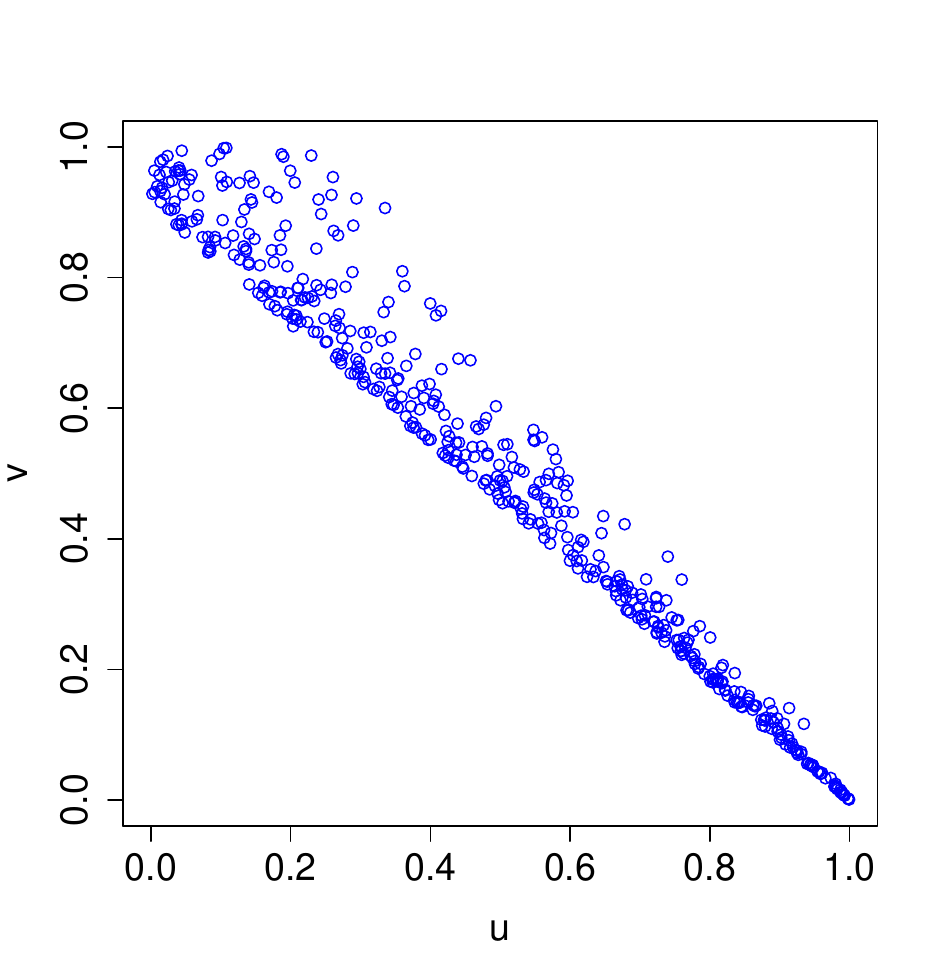}
		\label{fig:subfig003}
	}
	\caption{ Scatter plots based on $500$ simulated observations from $C_{\theta}$ for different choices of $\theta$.}
	\label{Scatter1}
\end{figure}

\subsection{Basic Properties}\label{Prop}
In this Subsection, we present three important propositions related to the proposed copula. The detailed proofs are presented in Appendix A. 
\begin{proposition}\label{pro21}
	The copula $C_{\theta}$, defined in (\ref{Copula}), is  decreasing with respect to its dependence parameter $\theta$, i.e., if $\theta_1\leq \theta_2$ then $C_{\theta_2}(u,v)\leq C_{\theta_1}(u,v)$, for all $(u,v) \in I^2 = [0,1]\times[0,1]$. 
\end{proposition}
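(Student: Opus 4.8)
The plan is to fix an arbitrary interior point $(u,v)\in(0,1)^2$, regard $g(\theta):=C_\theta(u,v)$ as a function of $\theta>0$, and prove that $g$ is non-increasing by showing $g'(\theta)\le 0$ wherever the derivative exists. The boundary cases $u\in\{0,1\}$ or $v\in\{0,1\}$ are immediate from properties (i)--(ii) listed after (\ref{Copula}), since there $C_\theta(u,v)$ equals one of $0,u,v$, independent of $\theta$. So the whole content lies in the interior.

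First I would clarify the region structure, because for fixed $(u,v)$ the branch of (\ref{Copula}) that applies changes with $\theta$. The point lies in the first branch when $v\le \theta/(1+\theta)$ and $u>1-(1+\theta)v/\theta$, in the second branch when $v>\theta/(1+\theta)$, and in the complementary set $\{v\le \theta/(1+\theta),\ u\le 1-(1+\theta)v/\theta\}$ the copula vanishes: substituting $1-u=(1+\theta)v/\theta$ into the first branch collapses it to $C_\theta=0$, so the two formulas extend continuously by $C_\theta\equiv 0$ on that set. A short check shows the two branches also agree on the shared boundary $v=\theta/(1+\theta)$ (both reduce to $u-\tfrac{1}{1+\theta}+\tfrac{(1-u)^{1+\theta}}{1+\theta}$). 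Hence $g$ is continuous on $(0,\infty)$ and its defining formula switches at only finitely many $\theta$, so it is enough to sign $g'$ on the interior of each region.

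On the vanishing region $g'=0$ trivially. On the second branch, writing $C_\theta=u-(1-v)+(1-v)(1-u)^{1+\theta}$ gives $\partial_\theta C_\theta=(1-v)(1-u)^{1+\theta}\log(1-u)\le 0$, because $1-u\in(0,1)$ forces $\log(1-u)\le 0$. On the first branch, set $T:=\frac{\theta^\theta}{(1+\theta)^{1+\theta}}(1-u)^{1+\theta}v^{-\theta}>0$; using $\frac{d}{d\theta}\big[\theta\log\theta-(1+\theta)\log(1+\theta)\big]=\log\frac{\theta}{1+\theta}$, logarithmic differentiation yields $\partial_\theta C_\theta=\partial_\theta T=T\,\log\frac{\theta(1-u)}{(1+\theta)v}$.

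The decisive observation, and the step I expect to be the crux, is that the sign of this last derivative is controlled exactly by the inequality defining the first region: $u>1-(1+\theta)v/\theta$ is precisely $\theta(1-u)<(1+\theta)v$, i.e. $\frac{\theta(1-u)}{(1+\theta)v}<1$, so the logarithm is negative and $\partial_\theta C_\theta<0$ throughout that region. Thus $g'\le 0$ on every piece; combined with continuity and the finiteness of the switching set, this makes $g$ non-increasing on $(0,\infty)$, which is exactly the assertion $C_{\theta_2}(u,v)\le C_{\theta_1}(u,v)$ for $\theta_1\le\theta_2$. The only genuinely delicate bookkeeping is tracking how the regions move with $\theta$; the pleasant point is that the region-defining inequality supplies precisely the bound needed to sign the derivative, so no separate estimate is required.
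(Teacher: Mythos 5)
Your argument is correct and is essentially the paper's own proof: you compute $\partial C_\theta/\partial\theta$ on each branch and use the region-defining inequality $\theta(1-u)<(1+\theta)v$ to sign the logarithm in the first case and $\log(1-u)\le 0$ in the second. The only addition is your explicit bookkeeping of how the branches move with $\theta$ (continuity across the switching set and the vanishing region), which the paper leaves implicit but which does not change the substance of the argument.
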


\begin{proposition}\label{pro22}
	The copula $C_{\theta}$, defined in (\ref{Copula}), is sub-harmonic, i.e., $\nabla^2 C_\theta(u,v) \geq 0$. 
\end{proposition}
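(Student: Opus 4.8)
The plan is to verify the pointwise inequality $\nabla^2 C_\theta = \partial^2_{uu}C_\theta + \partial^2_{vv}C_\theta \ge 0$ separately on each branch of the piecewise definition (\ref{Copula}), exactly in the spirit of the proof of Proposition \ref{pro21}. Writing $K=\theta^\theta/(1+\theta)^{1+\theta}>0$, on the first branch $C_\theta=u+v-1+K(1-u)^{1+\theta}v^{-\theta}$, so differentiating twice in each variable gives $\partial^2_{uu}C_\theta=K\theta(1+\theta)(1-u)^{\theta-1}v^{-\theta}$ and $\partial^2_{vv}C_\theta=K\theta(1+\theta)(1-u)^{1+\theta}v^{-\theta-2}$. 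Factoring out the common positive quantity yields
\[
\nabla^2 C_\theta = K\theta(1+\theta)(1-u)^{\theta-1}v^{-\theta-2}\big[\,v^2+(1-u)^2\,\big],
\]
which is manifestly nonnegative since $\theta>0$, $0<u<1$ and $v>0$ throughout this region (the exponent $\theta-1$ may be negative, but $(1-u)^{\theta-1}$ is still positive).

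On the second branch, $C_\theta=u+v-1+(1-v)(1-u)^{1+\theta}$ is affine in $v$, so $\partial^2_{vv}C_\theta=0$ and only the $u$-curvature survives:
\[
\nabla^2 C_\theta=\partial^2_{uu}C_\theta=\theta(1+\theta)(1-v)(1-u)^{\theta-1}\ge 0,
\]
again because $0<u<1$ and $\theta/(1+\theta)<v<1$. Combining the two branches then gives $\nabla^2 C_\theta\ge 0$ and establishes sub-harmonicity.

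The curvature computations themselves are routine; the only point that needs genuine care is the bookkeeping at the interfaces, since the two formulas in (\ref{Copula}) do not cover the whole unit square. On the leftover set $\{\,v\le\theta/(1+\theta),\ u\le 1-(1+\theta)v/\theta\,\}$ the copula is in fact identically zero, which can be confirmed by evaluating the first branch along $u=1-(1+\theta)v/\theta$ and checking that it vanishes; hence its Laplacian is $0$ there and the inequality holds trivially. Thus the three pieces tile $I^2$ and on each the Laplacian is nonnegative. If one wishes to assert sub-harmonicity in the stronger distributional sense, one should also check that no negative singular mass is generated along the creases $v=\theta/(1+\theta)$ and $u=1-(1+\theta)v/\theta$; a short computation shows that $C_\theta$ is in fact $C^1$ across both interfaces (the first-order partials agree from the two sides), so no singular contribution arises and the pointwise verification is conclusive.
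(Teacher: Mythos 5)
Your proof is correct and follows essentially the same route as the paper: compute the second partials on each branch of (\ref{Copula}) and observe that both contributions are nonnegative. Your additional checks — that $C_\theta$ vanishes on the region not covered by the two branches and that the first-order partials match across the creases — go beyond what the paper records and only strengthen the argument.
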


\begin{proposition}\label{pro23}
	The copula $C_{\theta}$, defined in (\ref{Copula}), is absolutely continuous.
\end{proposition}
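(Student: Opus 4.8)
The plan is to exhibit $C_\theta$ as the integral of its own density, which is precisely the statement that the copula carries no singular component with respect to Lebesgue measure on $I^2$. Following the decomposition in \citet[p.~27]{Nelson_2006}, any copula splits as $C = A_C + S_C$, where the absolutely continuous part is
\[
A_{C}(u,v) = \int_0^u\!\!\int_0^v \frac{\partial^2 C(s,t)}{\partial s\,\partial t}\,dt\,ds ,
\]
and $C$ is absolutely continuous exactly when $S_C \equiv 0$, i.e. when $C \equiv A_C$. Since the mixed partial $\partial^2 C_\theta/\partial u\,\partial v$ has already been identified as the density $c_\theta$ in (\ref{eqn_den}), the whole task reduces to verifying the identity $\int_0^u\int_0^v c_\theta(s,t)\,dt\,ds = C_\theta(u,v)$ for every $(u,v)\in I^2$.

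To do this I would first record the geometry of the support of $c_\theta$. The square $I^2$ splits, along the line $t = \theta/(1+\theta)$ and the segment $s = 1-(1+\theta)t/\theta$, into three pieces: the lower-left region $\{0<t\le\theta/(1+\theta),\ 0<s\le 1-(1+\theta)t/\theta\}$ on which $c_\theta\equiv 0$, the wedge $\{0<t\le\theta/(1+\theta),\ 1-(1+\theta)t/\theta<s<1\}$ carrying the first branch of (\ref{eqn_den}), and the upper strip $\{0<s<1,\ \theta/(1+\theta)<t<1\}$ carrying the second branch. The verification then proceeds by the same two cases used to define $C_\theta$. In Case~I, with $0<v\le\theta/(1+\theta)$ and $1-(1+\theta)v/\theta<u<1$, the rectangle $[0,u]\times[0,v]$ meets only the zero region and the wedge, so only the first branch contributes, with the inner integral in $s$ running from $1-(1+\theta)t/\theta$ up to $u$ as $t$ ranges over $(0,v]$. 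In Case~II, with $0<u<1$ and $\theta/(1+\theta)<v<1$, the rectangle additionally overlaps the upper strip, so the integral breaks into a contribution of the second branch over $t\in(\theta/(1+\theta),v]$ with $s\in[0,u]$ and a contribution of the first branch over $t\in(0,\theta/(1+\theta)]$ with the inner $s$-limits again dictated by the curve $s=1-(1+\theta)t/\theta$. Carrying out the iterated integrals and simplifying should reproduce the two branches of (\ref{Copula}) exactly.

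The routine part is the integration itself; the main obstacle, and the only place demanding care, is the bookkeeping of the variable integration limits induced by the curve $s=1-(1+\theta)t/\theta$, where the density switches between its nonzero branch and zero, together with the split at $t=\theta/(1+\theta)$. One should also confirm that no singular mass is concentrated on these two boundary curves; this is guaranteed because $C_\theta$ and its first-order partials $\partial C_\theta/\partial u$ and $\partial C_\theta/\partial v$ (the conditional copulas of (\ref{U_V_1})--(\ref{V_U})) are continuous across $t=\theta/(1+\theta)$, so the induced distribution spreads neither atoms nor line masses there. Once the iterated integrals are shown to match (\ref{Copula}) in both cases, we conclude $A_{C_\theta}\equiv C_\theta$, hence $S_{C_\theta}\equiv 0$, and therefore $C_\theta$ is absolutely continuous.
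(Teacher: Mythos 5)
Your proposal is correct and follows essentially the same route as the paper: both reduce absolute continuity to verifying $\int_0^u\int_0^v c_\theta(s,t)\,dt\,ds = C_\theta(u,v)$ via the same two-case split, with the integration limits dictated by the curve $s=1-(1+\theta)t/\theta$ and the line $t=\theta/(1+\theta)$ (the paper merely takes $s$ as the outer variable in Case I, which is equivalent by Fubini). What remains in your write-up is only the routine evaluation of the iterated integrals, which does reproduce both branches of (\ref{Copula}) exactly as in the paper.
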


\subsection{Measures of Dependence }\label{Dependence}
Measures of dependence are commonly used to summarize the complicated dependence
structure of bivariate distributions. See \citet[Ch-2]{Joe_1997}, \citet[Ch-5]{Nelson_2006} and \citet[Ch-2]{Mariu_2018} for a detailed review on measures of dependence and its associated properties. In this section, we derive the expressions of the Kendall’s tau and the Spearman’s rho for the proposed copula $C_\theta$. Essentially, these coefficients measure the correlation between the ranks rather than actual values of $X$ and $Y$. Therefore, these coefficients are unaffected by any monotonically increasing transformation of $X$ and $Y$.
\begin{definition}
	Let $X$ and $Y$ be the continuous random variables with the dependence structure described by the copula $C$. Then the population version
	of the Spearman’s rho for $X$ and $Y$ is given by
	$$\rho :=\int_0^1\int_0^1uvdC(u,v)-3= \int_0^1\int_0^1C(u,v)dudv-3$$
\end{definition}
\begin{proposition}
	Let $(X, Y)$ be a random pair with copula $C_\theta$. The Spearman’s rho
	is given by $$\rho =\dfrac{2(3+3\theta+\theta^2)}{2+3\theta+\theta^2}-3,$$ which is a decreasing function in $\theta$ and takes any values between -1 and 0. 
\end{proposition}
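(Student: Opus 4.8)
The plan is to evaluate the Spearman's rho through its copula-integral representation, $\rho = 12\int_0^1\int_0^1 C_\theta(u,v)\,du\,dv - 3$, so that the whole problem reduces to computing the single double integral $J := \int_0^1\int_0^1 C_\theta(u,v)\,du\,dv$. Since $C_\theta$ is given piecewise in (\ref{Copula}), the first step is to partition the unit square according to which branch is active. The key observation is that on the line $u = 1-(1+\theta)v/\theta$ the first branch already vanishes, so that $C_\theta \equiv 0$ on the lower-left region $\{0<v\le\theta/(1+\theta),\ 0\le u\le 1-(1+\theta)v/\theta\}$; this region contributes nothing to $J$.

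Next I would write $J = J_1 + J_2$, where $J_1$ integrates the first branch over the triangular region $\{0<v\le\theta/(1+\theta),\ 1-(1+\theta)v/\theta<u<1\}$ and $J_2$ integrates the second branch over the strip $\{0<u<1,\ \theta/(1+\theta)<v<1\}$. For $J_2$ I would integrate in $u$ first: $\int_0^1\bigl[u-(1-v)(1-(1-u)^{1+\theta})\bigr]\,du$ is elementary and leaves a linear function of $v$, which is then integrated over $v\in(\theta/(1+\theta),1)$. For $J_1$ the substitution $z=1-u$ turns the inner integral into $\int_0^{(1+\theta)v/\theta}\bigl[v-z+\frac{\theta^\theta}{(1+\theta)^{1+\theta}}z^{1+\theta}v^{-\theta}\bigr]\,dz$; the point to watch is that evaluating the power $z^{2+\theta}$ at the upper limit $(1+\theta)v/\theta$ produces a factor $(1+\theta)^{2+\theta}/\theta^{2+\theta}$ that cancels the transcendental prefactor $\theta^\theta/(1+\theta)^{1+\theta}$ exactly, leaving a multiple of $v^2$. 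Integrating this over $v\in(0,\theta/(1+\theta))$ then yields a rational function of $\theta$.

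I expect the main obstacle to be precisely this bookkeeping in $J_1$: fixing the correct limits of the triangular region and tracking the fractional powers of $\theta$ and $(1+\theta)$ so that the $\theta^\theta$-type terms cancel and one is genuinely left with a rational function rather than something transcendental. Everything else is routine integration. After adding $J_1$ and $J_2$ and forming $12J-3$, the resulting expression should collapse to $\frac{2(3+3\theta+\theta^2)}{2+3\theta+\theta^2}-3$.

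Finally, for the monotonicity and range claims I would rewrite this in the transparent form
\[
\rho = -1 + \frac{2}{(\theta+1)(\theta+2)},
\]
obtained by polynomial division, since $3+3\theta+\theta^2=(2+3\theta+\theta^2)+1$ and $2+3\theta+\theta^2=(\theta+1)(\theta+2)$. Because $(\theta+1)(\theta+2)$ is strictly increasing on $(0,\infty)$, the term $2/((\theta+1)(\theta+2))$ is strictly decreasing, whence $\rho$ is strictly decreasing in $\theta$. Letting $\theta\to 0^+$ gives $\rho\to -1+1=0$, while $\theta\to\infty$ gives $\rho\to -1$; by continuity together with strict monotonicity, $\rho$ attains every value in $(-1,0)$, which is the assertion.
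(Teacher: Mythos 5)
Your proposal is correct, and it is worth noting that the paper itself states this proposition without any proof, so your computation supplies an argument the text omits. I verified the two pieces: the inner integral of the first branch over $z\in\bigl(0,(1+\theta)v/\theta\bigr)$ collapses, after the cancellation of the $\theta^{\theta}/(1+\theta)^{1+\theta}$ prefactor that you flag, to $\frac{(1+\theta)^{2}}{2\theta(2+\theta)}v^{2}$, giving $J_{1}=\frac{\theta^{2}}{6(1+\theta)(2+\theta)}$; the strip contribution is $J_{2}=\frac{1}{2(2+\theta)}$; hence $J=\frac{3+3\theta+\theta^{2}}{6(1+\theta)(2+\theta)}$ and $12J-3$ is exactly the stated expression. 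Your preliminary observation that $C_{\theta}$ vanishes on the lower-left region $\{v\le\theta/(1+\theta),\ u\le 1-(1+\theta)v/\theta\}$ is also right (the first branch is zero on the boundary line, and monotonicity together with the Fr\'echet--Hoeffding lower bound forces $C_{\theta}=0$ below it), and it is a necessary step since the displayed formula (3) does not define $C_{\theta}$ there. One point of friction with the paper: its displayed ``definition'' of Spearman's rho reads $\int_{0}^{1}\int_{0}^{1}C(u,v)\,du\,dv-3$ with no factor of $12$; this is evidently a typographical slip, since only the standard normalization $\rho=12\iint C\,du\,dv-3$ that you use yields the claimed closed form. Your rewriting $\rho=-1+\frac{2}{(\theta+1)(\theta+2)}$ and the ensuing monotonicity and range argument are clean and complete.
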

\begin{definition}
	Let $X$ and $Y$ be the continuous random variables with copula $C$. Then, the population version of the Kendall’s tau for $X$ and $Y$
	is given by
	$$\tau := 4\int_0^1\int_0^1C(u,v)dC(u,v)-1$$
\end{definition}
\begin{proposition}
	Let $(X, Y)$ be a random pair with copula $C_\theta$. Then the Kendall’s tau
	is given by $$\tau =\dfrac{-\theta}{(1+\theta)},$$ which is a decreasing function in $\theta$ and takes any values between -1 and 0. 
\end{proposition}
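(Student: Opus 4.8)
The plan is to evaluate the Kendall's tau directly from its defining double integral, exploiting the absolute continuity of $C_\theta$ established above. Since $dC_\theta(u,v) = c_\theta(u,v)\,du\,dv$ with $c_\theta$ given in (\ref{eqn_den}), we may write
$$\tau = 4\int_0^1\int_0^1 C_\theta(u,v)\,c_\theta(u,v)\,du\,dv - 1,$$
so it suffices to show that the integral $I := \int_0^1\int_0^1 C_\theta\,c_\theta\,du\,dv$ equals $\tfrac{1}{4(1+\theta)}$, because then $\tau = 4I - 1 = \tfrac{1}{1+\theta} - 1 = -\tfrac{\theta}{1+\theta}$. The density vanishes off the set $R = \{(u,v)\in I^2 : v > \theta(1-u)/(1+\theta)\}$ (on the boundary line $v = \theta(1-u)/(1+\theta)$ one checks that $C_\theta = 0$), so the integration need only be carried out over the two pieces on which both $C_\theta$ and $c_\theta$ have their explicit forms.

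First I would dispose of the simpler piece, Region II, where $\tfrac{\theta}{1+\theta}<v<1$ and $0<u<1$. Here $c_\theta(u,v) = (1+\theta)(1-u)^\theta$ depends on $u$ alone and $C_\theta(u,v) = u - (1-v)\big[1-(1-u)^{1+\theta}\big]$, so the inner integral over $v$ followed by the outer integral over $u$ reduce to elementary powers of $(1-u)$; the substitution $w = 1-u$ turns everything into integrals of the form $\int_0^1 w^a\,dw$.

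The main work, and the step I expect to be the real obstacle, is Region I, where $0<v\le \tfrac{\theta}{1+\theta}$ and $1-\tfrac{(1+\theta)v}{\theta}<u<1$. There the product $C_\theta\,c_\theta$ carries the factor $v^{-\theta}$ from $C_\theta$ and $v^{-(1+\theta)}$ from $c_\theta$, together with a moving integration limit. I would integrate in the order that first controls the $v$-singularities, namely for fixed $u$ letting $v$ run over $(\theta(1-u)/(1+\theta),\,\theta/(1+\theta)]$, after the change of variable $w=1-u$; the constraint $w < (1+\theta)v/\theta$ keeps every term integrable as $v\to 0$. The bookkeeping among the several resulting powers of $v$ and $w$ is delicate, but the awkward fractional-power contributions should cancel against the boundary terms, leaving a rational function of $\theta$. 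Adding the Region I and Region II contributions and simplifying should then collapse to $I = \tfrac{1}{4(1+\theta)}$.

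Finally, the monotonicity and range follow at once from the closed form: differentiating gives $\frac{d\tau}{d\theta} = -\frac{1}{(1+\theta)^2} < 0$ for all $\theta>0$, so $\tau$ is strictly decreasing, and since $\tau\to 0$ as $\theta\to 0^+$ and $\tau\to -1$ as $\theta\to\infty$, it attains every value in $(-1,0)$.
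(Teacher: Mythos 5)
The paper states this proposition without any proof, so there is no argument of record to compare against; your direct evaluation of $\tau = 4\int_0^1\int_0^1 C_\theta(u,v)\,c_\theta(u,v)\,du\,dv - 1$ is the natural route, and I have checked that it works. Your preliminary observations are all correct: $c_\theta$ is supported on $R=\{v>\theta(1-u)/(1+\theta)\}$, $C_\theta$ vanishes on (and below) the line $v=\theta(1-u)/(1+\theta)$, the target value is $I=\tfrac{1}{4(1+\theta)}$, and Region II is elementary — it yields $I_2=\tfrac{1}{(1+\theta)(2+\theta)}-\tfrac{1}{4(1+\theta)^2}$. The only place you stop short is precisely the step you flag as the obstacle: Region I is asserted to ``collapse'' rather than computed. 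It does close, and more cleanly than you anticipate: writing $w=1-u$ and $a=\theta/(1+\theta)$, the inner integral over $w\in(0,v/a)$ produces only terms proportional to $v^{2+\theta}\cdot v^{-(1+\theta)}=v$ (every fractional power of $v$ cancels because the upper limit is proportional to $v$), so the $v$-integrand is exactly linear, with no delicate boundary cancellations needed; one gets $I_1=\tfrac{\theta^2}{2(1+\theta)^2}-\tfrac{\theta}{2(2+\theta)}+\tfrac{\theta}{4(1+\theta)^2}$, and $I_1+I_2=\tfrac{1}{4(1+\theta)}$ after simplification. The concluding monotonicity and range argument ($\tau'(\theta)=-1/(1+\theta)^2<0$, limits $0$ and $-1$) is fine. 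In short: the strategy is correct and complete in outline, but as written it is a plan rather than a proof — the Region I integral must actually be carried out, and once it is, everything you claim checks out.
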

In Figure \ref{Corr_plot}, we have plotted the Spearman’s rho and the Kendall’s tau against the dependence parameter $\theta$. It is easy to see that the Spearman’s rho is less than the Kendall’s tau for all $\theta>0$.



\begin{figure}
	\centering
	\includegraphics[scale=0.65]{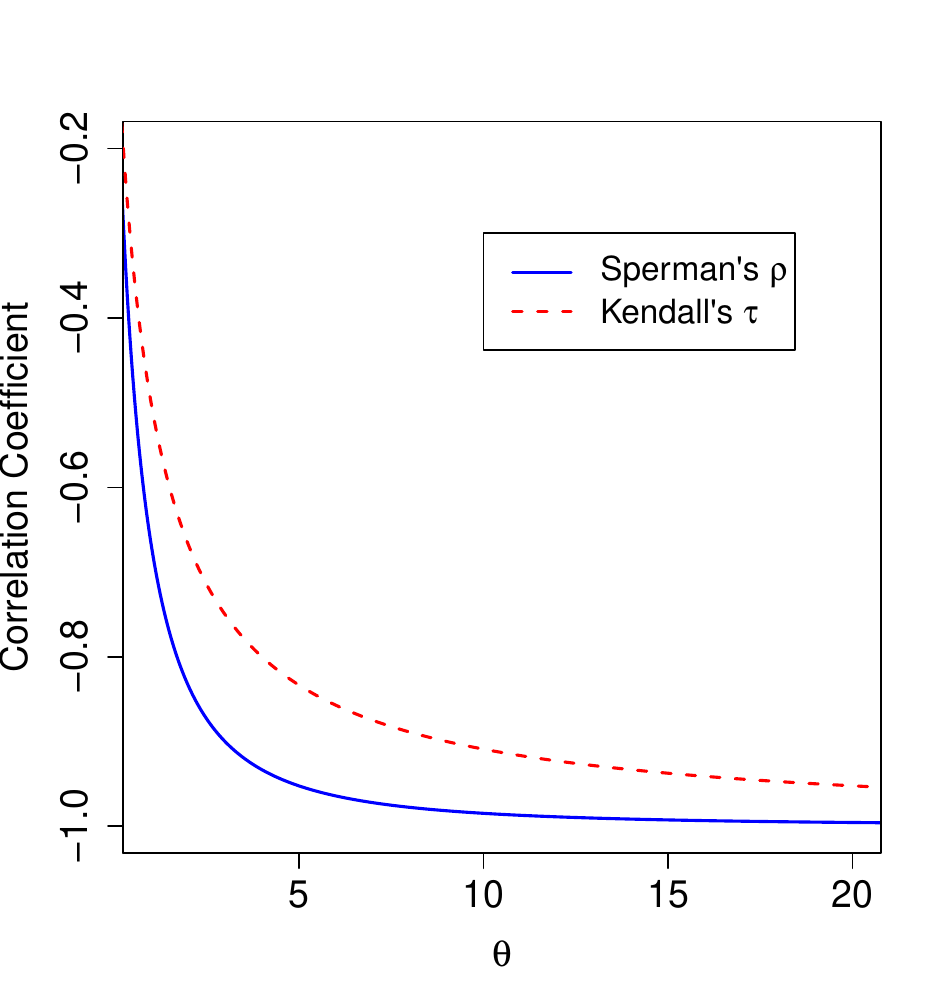}
	\caption{Plot of Spearman's rho and Kendall's tau against the dependence parameter $\theta$.}
	\label{Corr_plot}
\end{figure}

\section{Connections with notions of Negative Dependence}\label{NProp}
As discussed in Subsection \ref{Dependence}, the Spearman's rho and the Kendall's tau measure the correlation between two random variables. However, it is possible that these random variables may have the strong correlation, but possess the weak association with respect to different notions of dependence or vice versa. In this section, we discuss several relevant notions of negative dependence, namely \textit{Quadrant Dependence}, \textit{Regression Dependence} and \textit{Likelihood Ratio Dependence}, etc., and explore, whether the corresponding properties are satisfied by the proposed copula or not. First, we provide the definitions of the aforementioned dependence notions as discussed in 
\cite{Nelson_2006} and \cite{Dist}.
\begin{definition}\label{Orders}
	Let $X$ and $Y$ be continuous random variables with copula $C$. Then
	\begin{enumerate}
		\item $X$ and $Y$ are Negatively Quadrant Dependent (NQD) if $P(X \leq x, Y \leq y) \leq {P(X \leq x)P(Y \leq y)}$, for all $(x, y) \in R^2$, where $R^2$ is the domain of joint distribution of $X$ and $Y$, or equivalently a copula $C$ is said to be NQD if for all $(u,v)\in I^2$, $C(u,v)\leq uv.$
		\item $Y$ is left tail increasing in $X$ (LTI($Y\mid X$)), if $P[Y \leq y \mid X \leq x]$ is a nondecreasing function of $x$ for all $y$.
		\item $X$ is left tail increasing in $Y$ (LTI($X\mid Y$)), if $P[X \leq x \mid Y \leq y]$ is a nondecreasing function of $y$ for all $x$.
		\item $Y$ is right tail decreasing in $X$ (RTD($Y\mid X$)), if $P[Y > y \mid X > x]$ is a nonincreasing function of $x$ for all $y$.
		\item $X$ is right tail decreasing in $Y$ (RTD($X\mid Y$)), if $P[X >x \mid Y > y]$ is a nonincreasing function of $y$ for all $x$. 
		\item $Y$ is stochastically decreasing in $X$ denoted as SD($Y\mid X$), (also known as negatively regression dependent ($Y\mid X$)) if
		$P[Y > y \mid X = x]$ is a nonincreasing function of $x$ for all $y$. 
		\item  $X$ is stochastically decreasing in $Y$ denoted as SD($X\mid Y$), (also known as negatively regression dependent ($X\mid Y$)) if 
		$P[X > x\mid  Y = y]$ is a nonincreasing function of $y$ for all $x$.
		\item \label{eqn_NLR} Let $X$ and $Y$ be continuous random variables with joint density function $h(x,y)$. Then $X$ and $Y$ are negatively likelihood ratio
		dependent, denote by NLR(X,Y), if 
		$h(x_1,y_1)h(x_2,y_2)\leq h(x_1,y_2)h(x_2,y_1)$
		for all $x_1, x_2, y_1, y_2\in I$ such that $x_1\leq x_2$ and $y_1\leq y_2$.
	\end{enumerate}
\end{definition}
Now, in the following theorems, we establish that the proposed copula $C_{\theta}$ satisfies all the aforementioned dependence properties.
The detailed proofs are provided in Appendix B. 

\begin{theorem}\label{thm6}
	Let $X$ and $Y$ be two random variables with copula $C_\theta$. Then 
	$(i)$ $X$ and $Y$ are LTI($Y\mid X$),
	$(ii)$ $X$ and $Y$ are LTI($X\mid Y$),
	$(iii)$ $X$ and $Y$ are RTD($Y\mid X$), and
	$(iv)$ $X$ and $Y$ are RTD($X\mid Y$).
\end{theorem}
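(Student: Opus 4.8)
All four assertions depend only on the copula, so I would work throughout with $C_\theta$ and use the standard reductions of tail monotonicity to conditions on the copula \citep{Nelson_2006}. Writing $u=F(x)$, $v=G(y)$, one has $P[Y\le y\mid X\le x]=C_\theta(u,v)/u$ and $P[X\le x\mid Y\le y]=C_\theta(u,v)/v$, so $(i)$ and $(ii)$ amount to showing that $u\mapsto C_\theta(u,v)/u$ and $v\mapsto C_\theta(u,v)/v$ are nondecreasing for each fixed value of the other argument. Likewise $P[Y>y\mid X>x]=D(u,v)/(1-u)$ and $P[X>x\mid Y>y]=D(u,v)/(1-v)$ with $D(u,v):=1-u-v+C_\theta(u,v)$, so $(iii)$ and $(iv)$ amount to showing that $u\mapsto D(u,v)/(1-u)$ and $v\mapsto D(u,v)/(1-v)$ are nonincreasing. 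Before differentiating I would record that $C_\theta$ lives on three pieces: the two pieces of (\ref{Copula}), say Region A ($\frac{\theta(1-u)}{1+\theta}<v\le\frac{\theta}{1+\theta}$) and Region B ($v>\frac{\theta}{1+\theta}$), together with the no-mass triangle $v\le\frac{\theta(1-u)}{1+\theta}$ on which $C_\theta\equiv0$; since each ratio is continuous across the shared boundaries, piecewise monotonicity yields global monotonicity.

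The auxiliary quantity simplifies cleanly: direct substitution gives $D(u,v)=K(1-u)^{1+\theta}v^{-\theta}$ on Region A (with $K=\theta^\theta/(1+\theta)^{1+\theta}$), $D(u,v)=(1-v)(1-u)^{1+\theta}$ on Region B, and $D=1-u-v$ on the triangle. Hence for $(iii)$ the ratio $D/(1-u)$ equals $K(1-u)^{\theta}v^{-\theta}$, $(1-v)(1-u)^{\theta}$, or $1-v(1-u)^{-1}$ on the three pieces, each visibly nonincreasing in $u$; and for $(iv)$ the ratio $D/(1-v)$ equals $K(1-u)^{1+\theta}v^{-\theta}/(1-v)$, $(1-u)^{1+\theta}$, or $1-u(1-v)^{-1}$, whose $v$-derivatives I would check are $\le 0$ — the only nontrivial one being Region A, where differentiating $v^{-\theta}/(1-v)$ produces the factor $v(1+\theta)-\theta$, which is $\le 0$ precisely because $v\le\frac{\theta}{1+\theta}$ there. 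So $(iii)$ and $(iv)$ are routine.

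For the LTI statements I would differentiate $C_\theta/u$ and $C_\theta/v$ directly. On the triangle both ratios are $0$, and on Region B they are easy: for $(ii)$ one reduces $\partial_v(C_\theta/v)\ge0$ to $(1-u)^{1+\theta}\le 1-u$, true since $1+\theta>1$. On Region A one finds $\partial_v(C_\theta/v)\ge0$ is equivalent to $\big(\tfrac{\theta(1-u)}{(1+\theta)v}\big)^{\theta}\le 1$, i.e. exactly the defining inequality $v>\frac{\theta(1-u)}{1+\theta}$ of Region A, so $(ii)$ falls out of the support description itself.

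The main obstacle is $(i)$ on Region A. There $\partial_u(C_\theta/u)\ge0$ is equivalent to $f(u,v):=(1-v)-K(1-u)^{\theta}v^{-\theta}(1+\theta u)\ge0$, and the crude bound $K(1-u)^{\theta}v^{-\theta}\le\frac1{1+\theta}$ coming from $1-u<\frac{(1+\theta)v}{\theta}$ is not sharp enough to conclude. My plan is instead to regard $f$ as a function of $s:=1-u$ for fixed $v$: a short computation gives $\partial_s f=-K\theta(1+\theta)v^{-\theta}s^{\theta-1}(1-s)<0$, so $f$ is strictly decreasing in $s$, while at the Region-A/triangle boundary $s=\frac{(1+\theta)v}{\theta}$ one computes $f=0$ exactly. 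Since $s$ ranges over $\big(0,\frac{(1+\theta)v}{\theta}\big)$ on Region A, monotonicity forces $f\ge0$ throughout, giving $(i)$. Assembling the three regions and invoking continuity then yields all four claims; the boundary-vanishing identity $f=0$ is the crux, and I would verify it alongside the continuous matching of $C_\theta/u$ across the no-mass boundary (where that ratio is identically $0$).
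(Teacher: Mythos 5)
Your proposal is correct and follows the same basic strategy as the paper: all four properties are reduced, via Theorem 5.2.5 of \citet{Nelson_2006}, to monotonicity of $C_\theta(u,v)/u$, $C_\theta(u,v)/v$ and of the tail ratios (your $D/(1-u)$, $D/(1-v)$ with $D=\bar C_\theta$ is algebraically the same as the paper's $(v-C_\theta)/(1-u)$, $(u-C_\theta)/(1-v)$), which are then checked region by region by differentiation; your simplifications of $D$ on the two pieces agree with the survival copula displayed in Section \ref{Copu}, so (ii), (iii) and (iv) match the paper's computations. Three differences are worth noting. You explicitly include the null triangle $v\le\theta(1-u)/(1+\theta)$, where $C_\theta\equiv 0$, as a third region and glue the pieces by continuity; the paper checks only the two regions of (\ref{Copula}), so yours is the more complete bookkeeping. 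More substantively, for the hardest sub-case --- LTI($Y|X$) on Region A --- the paper merely displays the derivative and asserts it is positive, whereas you supply an actual proof: $f$ is strictly decreasing in $s=1-u$ (the sign of $\partial_s f=-K\theta(1+\theta)v^{-\theta}s^{\theta-1}(1-s)$ is correct) and vanishes identically at the boundary $s=(1+\theta)v/\theta$, forcing $f\ge 0$ on $0<s<(1+\theta)v/\theta$; this is exactly the argument missing from the paper. Conversely, the one place you are lighter than the paper is LTI($Y|X$) on Region B, which you dismiss as ``easy'' without stating it: the required inequality there is $1-(1-u)^{\theta}(1+\theta u)\ge 0$, which is not a one-liner of the $(1-u)^{1+\theta}\le 1-u$ type; the paper proves it by noting that $h(u)=(1-u)^{\theta}(1+\theta u)$ satisfies $h(0)=1$ and $h'(u)=-\theta^2(1+\theta)u(1-u)^{\theta-1}<0$ (equivalently, $\theta\log(1-u)+\log(1+\theta u)\le-\theta u+\theta u=0$). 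With that one line added, your argument is complete.
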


\begin{theorem}\label{thm7}
	Let $X$ and $Y$ be two random variables with copula $C_\theta$. Then $(i)$ $X$ and $Y$ are SD($Y\mid X$), and
	$(ii)$ $X$ and $Y$ are SD($X\mid Y$).
\end{theorem}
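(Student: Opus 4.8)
The plan is to recast both statements in terms of the conditional copulas already obtained in (\ref{V_U}), (\ref{U_V_1}) and (\ref{U_V_2}), so that no new computation of derivatives of $C_\theta$ itself is needed. Recall that SD($Y|X$) requires $P[V>v\mid U=u]$ to be nonincreasing in $u$ for every $v$; since $C_\theta(v|u)=P[V\le v\mid U=u]$, this is equivalent to $C_\theta(v|u)$ being nondecreasing in $u$ for each fixed $v$. Symmetrically, SD($X|Y$) is equivalent to $C_\theta(u|v)$ being nondecreasing in $v$ for each fixed $u$. (This is the characterization of negative regression dependence through conditional distribution functions; see \citep{Nelson_2006}.) Thus the theorem reduces to two monotonicity statements about functions the paper has already written down explicitly.

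For part (i) I would differentiate $C_\theta(v|u)$ in (\ref{V_U}) with respect to $u$ on each of its two pieces. On $\frac{(1-u)\theta}{1+\theta}<v\le\frac{\theta}{1+\theta}$ one obtains $\partial_u C_\theta(v|u)=\frac{\theta^{1+\theta}}{(1+\theta)^{\theta}}(1-u)^{\theta-1}v^{-\theta}\ge 0$, and on $\frac{\theta}{1+\theta}<v<1$ one obtains $\partial_u C_\theta(v|u)=\theta(1+\theta)(1-v)(1-u)^{\theta-1}\ge 0$ (using $v<1$). Both are manifestly nonnegative on $(0,1)$, so $C_\theta(v|u)$ increases in $u$ within each region.

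For part (ii) the analogous computation on (\ref{U_V_1}) gives $\partial_v C_\theta(u|v)=\frac{\theta^{1+\theta}}{(1+\theta)^{\theta}}(1-u)^{1+\theta}v^{-(2+\theta)}\ge 0$ for $v\le\frac{\theta}{1+\theta}$, while on (\ref{U_V_2}) the function $1-(1-u)^{1+\theta}$ is free of $v$, so its $v$-derivative vanishes. Hence $C_\theta(u|v)$ is nondecreasing in $v$ within each region as well.

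The genuine obstacle is not the interior derivatives but the piecewise and support-boundary stitching, and this is where the argument must be made carefully. For a fixed $v\le\frac{\theta}{1+\theta}$ the point $u$ may lie below the conditional support, i.e. $u\le 1-\frac{(1+\theta)v}{\theta}$, where $C_\theta(v|u)=0$; I would verify that the first-piece formula collapses to $0$ exactly at $u=1-\frac{(1+\theta)v}{\theta}$ (using $(1-u)^\theta=\left(\frac{(1+\theta)v}{\theta}\right)^\theta$ there), so that the conditional distribution function joins the identically-zero region continuously and introduces no downward jump. Symmetrically, for part (ii) with fixed $u$, I would check continuity at the lower support edge $v=\frac{\theta(1-u)}{1+\theta}$, where the first piece again evaluates to $0$, and at the interface $v=\frac{\theta}{1+\theta}$, where the first piece evaluates to $1-(1-u)^{1+\theta}$ and thus matches the constant second piece. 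Once these three interface evaluations are confirmed, the piecewise-increasing (respectively increasing-then-constant) behaviour assembles into global monotonicity on $I$, which establishes SD($Y|X$) and SD($X|Y$) and completes both parts.
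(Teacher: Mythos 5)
Your proposal is correct and is essentially the paper's own argument in different clothing: the paper invokes Nelsen's Corollary 5.2.11 and verifies that $C_\theta$ is convex in $u$ (resp.\ in $v$) by computing $\partial^2 C_\theta/\partial u^2\ge 0$ and $\partial^2 C_\theta/\partial v^2\ge 0$, and since $C_\theta(v|u)=\partial C_\theta/\partial u$ and $C_\theta(u|v)=\partial C_\theta/\partial v$, the expressions you differentiate are identical to theirs. Your additional continuity checks at the support edges and at the interface $v=\theta/(1+\theta)$ are a point of care the paper's proof passes over silently, and they do no harm.
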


\begin{theorem}\label{NLR_thm}
	Let $X$ and $Y$ be two random variables with copula $C_\theta$. Then $X$ and $Y$ are NLR.
\end{theorem}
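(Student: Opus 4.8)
The plan is to reduce the NLR condition on $(X,Y)$ to a reverse-regularity (RR$_2$) statement about the copula density $c_\theta$ of (\ref{eqn_den}), and then exploit the fact that $c_\theta$ factorizes on its support. Writing the joint density as $h(x,y)=c_\theta(F(x),G(y))f(x)g(y)$, the marginal factors $f(x_1)f(x_2)g(y_1)g(y_2)$ occur on both sides of the inequality in part \ref{eqn_NLR} of Definition \ref{Orders} and cancel; since $F$ and $G$ are increasing, the substitution $u=F(x)$, $v=G(y)$ preserves the ordering and turns NLR into the single requirement
$$c_\theta(u_1,v_1)\,c_\theta(u_2,v_2)\le c_\theta(u_1,v_2)\,c_\theta(u_2,v_1)\qquad\text{for all }u_1\le u_2,\ v_1\le v_2 .$$
Thus the theorem amounts to showing that $c_\theta$ is reverse regular of order two.

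The key observation I would use is that, on its support, $c_\theta$ is a product of a function of $u$ and a function of $v$. From (\ref{eqn_den}) one can write $c_\theta(u,v)=(1-u)^\theta\,m_\theta(v)$, where $m_\theta(v)=\frac{\theta^{1+\theta}}{(1+\theta)^\theta}v^{-(1+\theta)}$ for $0<v\le \frac{\theta}{1+\theta}$ and $m_\theta(v)=1+\theta$ for $\frac{\theta}{1+\theta}<v<1$; a short computation shows these two pieces agree at $v=\frac{\theta}{1+\theta}$, so the $u$-factor $(1-u)^\theta$ is genuinely common to both regimes. Consequently $\log c_\theta(u,v)=\theta\log(1-u)+\log m_\theta(v)$ is additively separable, and the mixed difference $\log c_\theta(u_1,v_1)+\log c_\theta(u_2,v_2)-\log c_\theta(u_1,v_2)-\log c_\theta(u_2,v_1)$ vanishes identically, giving \emph{equality} in the displayed inequality at every point where all four density values are positive (this also disposes of the mixed case in which the corner points straddle the boundary $v=\frac{\theta}{1+\theta}$).

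What remains, and what I expect to be the only genuine obstacle, is the non-rectangular support: if a cross point $(u_1,v_2)$ or $(u_2,v_1)$ lay outside the support while both corners $(u_1,v_1),(u_2,v_2)$ lay inside it, the right-hand side would vanish and the inequality could fail. I would handle this by merging the two parameter regions of (\ref{eqn_den}) to identify the support explicitly as
$$S=\Bigl\{(u,v)\in(0,1)^2:\ v>\tfrac{\theta(1-u)}{1+\theta}\Bigr\},$$
and observing that $S$ is an up-set: it is increasing in $v$ trivially, and increasing in $u$ because the threshold $\theta(1-u)/(1+\theta)$ is decreasing in $u$. Hence whenever $(u_1,v_1),(u_2,v_2)\in S$ with $u_1\le u_2$, $v_1\le v_2$, raising $u$ gives $(u_2,v_1)\in S$ and raising $v$ gives $(u_1,v_2)\in S$, so all four density values are simultaneously positive.

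Finally I would assemble the two pieces. Writing $c_\theta(u,v)=(1-u)^\theta m_\theta(v)\,\mathds{1}_S(u,v)$, the common smooth factor $(1-u_1)^\theta(1-u_2)^\theta m_\theta(v_1)m_\theta(v_2)$ cancels from both sides of the target inequality, which therefore reduces to $\mathds{1}_S(u_1,v_1)\mathds{1}_S(u_2,v_2)\le \mathds{1}_S(u_1,v_2)\mathds{1}_S(u_2,v_1)$; this holds for any up-set, as just shown. Combining the separability (which yields equality on the interior of $S$) with the up-set property of the support establishes that $c_\theta$ is RR$_2$, and hence that $X$ and $Y$ are NLR.
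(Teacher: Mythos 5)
Your proposal is correct and rests on the same core observation as the paper's proof: the copula density factors as a function of $u$ times a function of $v$, so the mixed product condition for NLR holds with equality wherever all four density values are positive. The paper's own proof is essentially that one sentence --- it asserts that $c_\theta(u_1,v_1)c_\theta(u_2,v_2)= c_\theta(u_1,v_2)c_\theta(u_2,v_1)$ for \emph{all} $u_1\leq u_2$, $v_1\leq v_2$ in $I$. What you add, and what the paper glosses over, is the treatment of the non-rectangular support: since $c_\theta$ vanishes for $v\leq\theta(1-u)/(1+\theta)$, the claimed identity actually fails when, say, $(u_1,v_1)$ lies outside the support while the other three points lie inside (the left side is then $0$ and the right side is strictly positive). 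The required \emph{inequality} still holds, and your argument --- identifying the support as an up-set, so that $(u_1,v_1),(u_2,v_2)\in S$ forces $(u_1,v_2),(u_2,v_1)\in S$, combined with the additive separability of $\log c_\theta$ across both regimes (after checking the two pieces of $m_\theta$ agree at $v=\theta/(1+\theta)$) --- is the clean way to close that gap. So your proof is not a different route, but it is a more rigorous version of the paper's, and it corrects the paper's overstatement that equality holds everywhere.
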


\begin{remark}
	{\rm 
		Two random variables $X$ and $Y$ with copula $C_{\theta}$  are NQD. This directly follows from Theorem \ref{NLR_thm}. See the interrelationships between different concepts of negative dependence summarised in \citep[p-130]{Dist} for details.}
\end{remark}
\section{Ordering Properties}\label{OrderP}
In Section \ref{NProp}, several negative dependence properties of the proposed copula $C_\theta$ has been investigated for the fixed $\theta>0$. In this section, we discuss the ordering properties of the proposed copula $C_\theta$, which provides a precise (and also intuitively expected) notion for one bivariate distribution being more positively or negatively associated than another. For this purpose, we first recall the definitions of the dependence orderings for bivariate distributions.
These definitions describe the strength of dependence of a copula with respect to its dependence parameter $\theta$.  
\cite{Leh} was first to introduce the NQD and NRD notions. Following this
notions, \cite{Yanagimoto_Okamoto_1969} introduced the ordering properties as defined below.
\begin{definition}
	Let $F$ and $G$ be two bivariate distributions with the same marginals. Then $F$ is said to be smaller than $G$ in the NQD sense denoted as $F\prec_{NQD}G$ if $$F(x,y)\geq G(x,y)\,\,\,\,\,\,\forall x\,\, {\rm and}\,\, y.$$
\end{definition}

\begin{definition}\label{def42}
	Let $F$ and $G$ be two bivariate distributions with the same marginals, and let $(U, V)$ and $(X, Y)$ be two random vectors having the
	distributions $F$ and $G$, respectively. Then $F$ is said to be smaller than
	$G$ in the NRD sense, denoted by $F\prec_{NRD} G $ or $(U, V)\prec_{NRD} (X, Y)$ if, for any $x_1\leq x_2$,
	\begin{equation*}
		F^{-1}_{V\mid U}(u\mid x)\geq F^{-1}_{V\mid U}(v\mid x^{'})\implies G^{-1}_{V\mid U}(u\mid x)\geq G^{-1}_{V\mid U}(v\mid x^{'})
	\end{equation*}
	for any $u, v \in I$, where $F_{V \mid U}$ denote the conditional distribution of $V$ given $U = u$ and  $F^{-1}_{V\mid U}$ denote its right-continuous inverse. Equivalently, $F\prec_{NRD} G $ if and only if $G^{-1}_{Y\mid X}\left[F_{V\mid U}(y\mid x)\mid x\right]$ is decreasing in $x$ for all $y$ \citep{Fang_Joe_1992}. 
\end{definition}
Later, \cite{Kimeldorf_Sampson_1987} have introduced and studied in detail the notion of the Negatively Likelihood Ratio dependence ordering that is described in the following definition.
Let the random variables $X$ and $Y$ have the joint distribution $G(x,y)$. For any two intervals $I_1$ and $I_2$ of the real line, let us denote $I_1\leq I_2$ if $x_1\in I_1$ and $x_2\in I_2$ imply that $x_1\leq x_2$. For any two intervals $I$ and $J$ of the real line let $G(I, J)$ represent the probability assigned by $G$ to the rectangle $I\times J$.
\begin{definition}\label{def_NLR}
	Let $F$ and $G$ be two bivariate distributions with the same marginals, and let $(U, V)$ and $(X, Y)$ be two random vectors having the distributions $F$ and $G$, respectively.  Then $F$ is said to be smaller than $G$ in the NLR dependence sense, denoted by $F \prec_{NLR} G$ or $(U, V) \prec_{NLR} (X, Y)$ if
	$ F(I_1, J_1)F(I_2, J_2) G(I_1, J_2)G(I_2, J_1)\geq F(I_1, J_2)F(I_2, J_1) G(I_1, J_1)G(I_2, J_1)$
	whenever  $I_1\leq I_2$ and $J_1\leq J_2$. When the densities $F$ and $G$ exist and denoted by $f$ and $g$, respectively, then the aforementioned condition equivalently is written as 
	$f(x_1, y_1)f(x_2, y_2) g(x_1, y_2)g(x_2, y_1)\geq f(x_1, y_2)f(x_2, y_1) g(x_1, y_1)g(x_2, y_1) $
	whenever  $x_1\leq x_2$ and $y_1\leq y_2$.
	
	
\end{definition}

In the following theorems, we derive the sufficient conditions under which one bivariate distribution will be more negatively associated than another. The detailed proofs of the following theorems are presented in Appendix C.

\begin{theorem}\label{thm9}
	If $\theta_1\leq \theta_2$, then $C_{\theta_1}(u,v)\prec_{NQD}C_{\theta_2}(u,v).$
\end{theorem}

\begin{theorem}\label{thm10}
	If $\theta_1\leq \theta_2$, then $C_{\theta_1}(u,v)\prec_{NRD}C_{\theta_2}(u,v).$
\end{theorem}

\begin{theorem}\label{thm11}
	If $\theta_1\leq \theta_2$, then $C_{\theta_1}(u,v)\prec_{NLR}C_{\theta_2}(u,v).$
\end{theorem}

\section{Examples}\label{Exam}
Traditionally, bivariate life distributions available in the literature are positively correlated \citep{Dist}. However, in many real life scenarios, paired observations of non-negative variables are negatively correlated \citep{Bhuyan_2020}. For example, the rainfall intensity and the duration are jointly modeled incorporating their negative dependence for the study of the corresponding flood frequency distribution \citep{Rain}. \citet{Gum} and \citet{Fre} have proposed the bivariate Exponential distributions with lower bound of the correlation coefficient as $-0.4$. 
In this section, several specific families of bivariate distributions are generated using the proposed copula (\ref{Copula}) with different choices for marginal distribution. For modelling purposes, the Lognormal, Weibull, and Gamma distributions are popular among practitioners in the fields of engineering, medical science, and environmental science \citep{OZ_2016,Wind_2017, Brain_2019}. We consider these choices as baseline distribution. We first define a bivariate Weibull and bivariate Gamma distribution. Then we consider a case when the marginals are different, one from the Lognormal and another from the Weibull family. It should be noted that the resulting bivariate distributions can be described implementing all notions of negative dependence discussed in Section \ref{NProp} and \ref{OrderP}. 

\begin{example}\label{bvweibull}
	{\bf Bivariate Weibull distribution:}
	{\rm 
		A family of bivariate Weibull distributions based on the proposed copula $C_{\theta}$, with marginals $F(x) =\left[ 1 - e^{-(\lambda_1 x)^{\delta_1}}\right]\mathds{1}(x>0)$, and $G(y) = \left[1 - e^{-(\lambda_2 y)^{\delta_2}}\right]\mathds{1}(y>0)$, is given by
		\begin{equation*}\label{BW}
			h(x,y) = \begin{dcases}
				\dfrac{\delta_1\delta_2\lambda_1^{\delta_1}\lambda_2^{\delta_2}\theta^{\theta+1}}{(1+\theta)^{\theta}}x^{\delta_1 -1}y^{\delta_2 - 1} \left(\dfrac{e^{-(\lambda_1 x)^{\delta_1}}}{1-e^{-(\lambda_2 y)^{\delta_2}}}\right)^{1+\theta}, &
				0<y \leq\phi_1, x>\phi_2(y) \\
				\delta_1\delta_2\lambda_1^{\delta_1}\lambda_2^{\delta_2}(1+\theta)x^{\delta_1 -1}y^{\delta_2 - 1}e^{-(\lambda_2 y)^{\delta_2}}\left(e^{-(\lambda_1 x)^{\delta_1}}\right)^{1+\theta}, &  x>0, y>\phi_1
			\end{dcases}
		\end{equation*}
		where $\phi_1 =\dfrac{1}{\lambda_2}\left[\log(1+\theta)\right]^{\frac{1}{\delta_2}}$, $\phi_2(y)= \dfrac{1}{\lambda_1}\left[\log\left(\dfrac{\theta}{(1+\theta)(1 - e^{-(\lambda_2 y)^{\delta_2}})}\right)\right]^{\frac{1}{\delta_1}}$, $\lambda_{i}>0$, $\delta_{i}>0$ for $i=1,2$.

}\end{example}

\begin{example} {\bf Bivariate Gamma distribution:}\label{bvgamma} {\rm A family of bivariate Gamma distributions based on the proposed copula $C_{\theta}$, with marginals $F(x)=\left[\int_0^x\frac{1}{\Gamma(\alpha_1)}\beta_1^{\alpha_1}x^{\alpha_1 -1}e^{-\beta_1 x} \right]\mathds{1}(x>0)$, and $G(y) =\left[\int_0^y\frac{1}{\Gamma(\alpha_2)}\beta_2^{\alpha_2}y^{\alpha_2 -1}e^{-\beta_2 y} \right]{\mathds{1}(y>0)}$, is given by
		
		\begin{equation*}
			h(x,y) = \begin{dcases}
				\dfrac{\beta_1^{\alpha_1}\beta_2^{\alpha_2}\theta^{1+\theta}x^{\alpha_1 -1}y^{\alpha_2 -1}e^{-(\beta_1 x+\beta_2 y)}}{\Gamma(\alpha_1)\Gamma(\alpha_2)(1+\theta)^\theta}\left[1-\dfrac{\gamma_1(\alpha_1, \beta_1 x)}{\Gamma(\alpha_1)}\right]^\theta \left[\dfrac{\gamma_2(\alpha_2, \beta_2 y)}{\Gamma(\alpha_2)}\right]^{-(1+\theta)}, \\\hspace{200pt}0<y \leq\xi_2 , \xi_1(y)<x< \eta \\
				\dfrac{\beta_1^{\alpha_1}\beta_2^{\alpha_2}(1+\theta)}{\Gamma(\alpha_1)\Gamma(\alpha_2)}x^{\alpha_1 -1}y^{\alpha_2 -1}e^{-(\beta_1 x+\beta_2 y)}\left[1-\dfrac{\gamma_1(\alpha_1, \beta_1 x)}{\Gamma(\alpha_1)}\right]^\theta,\\\hspace{200pt}
				0<x<\eta, 
				\zeta_1<y<\zeta_2,
			\end{dcases}
		\end{equation*}
		where $\zeta_1=\gamma_2^{-1}\left(\dfrac{\theta}{1+\theta}\right)$, $\zeta_2 = \gamma_2^{-1}(\Gamma(\alpha_2))$, $\xi_2= \gamma_2^{-1}\left(\dfrac{\Gamma(\alpha_2)\theta}{1+\theta}\right)$, $\eta = \gamma_1^{-1}(\Gamma(\alpha_1))$, \\
		$\xi_1(y)= \gamma_1^{-1}\left[\Gamma(\alpha_1) \left(1-\dfrac{(1+\theta)\gamma_2(\alpha_2,\beta_2 y)}{\theta \Gamma(\alpha_2)}\right)\right]$, $\gamma_i(\alpha_i, \beta_i) = \int_0^{\beta_i} t^{\alpha_i - 1} e^{-t}dt$, $\alpha_{i}>0$, $\beta_{i}>0$ for ${i= 1,2}$.
	}
\end{example}

\begin{example} {\bf Bivariate Lognormal-Weibull distribution:}\label{bvlnw}
	A family of bivariate distribution with one marginal from Lognormal distribution and another from Weibull distribution based on the proposed copula $C_{\theta}$, with marginal distribution functions $F(x) =\frac{1}{2}\left[ 1 + erf\left(\frac{\ln{x}-\mu}{\sqrt{2}\sigma}\right)\right]\mathds{1}(x>0)$, and $G(y) = \left[1 - e^{-(\lambda y)^{\delta}}\right]{\mathds{1}(y>0)}$, is given by
	\begin{equation*}\label{BW1}
		h(x,y) = \begin{dcases}
			\dfrac{\delta\lambda^\delta\theta^{\theta+1}}{\sigma\sqrt{\pi} (1+\theta)^{\theta} 2^{\frac{2\theta +1}{2}}}  \dfrac{y^{\delta -1}e^{-(\lambda y)^\delta}}{x\left(1-e^{-(\lambda y)^\delta}\right)^{1+\theta}}\left[ 1 - erf\left(\frac{\ln{x}-\mu}{\sqrt{2}\sigma}\right)\right]^\theta, \\\hspace{190pt}
			0<y \leq\psi_1, \psi_2(y)<x<\psi_3 \\
			\dfrac{\delta(1+\theta)\lambda^{\delta}}{\sigma\sqrt{\pi}2^{\frac{2\theta +1}{2}} } \frac{y^{\delta -1}e^{-(\lambda y)^\delta}}{x}\left[ 1 - erf\left(\frac{\ln{x}-\mu}{\sqrt{2}\sigma}\right)\right]^\theta , \\\hspace{210pt} x>0, y>\psi_1
		\end{dcases}
	\end{equation*}
	where $\psi_1 =\frac{1}{\lambda}\left[\log(1+\theta)\right]^{\frac{1}{\delta}}$, $\psi_2(y)=\exp\left[\mu+\sigma \sqrt{2}\,erf^{-1}\left\lbrace 1-\frac{2(1+\theta)}{\theta}\left(1-e^{-(\lambda y)^\delta} \right)\right\rbrace\right]$, $\psi_3=\mu+\sigma \sqrt{2}\,erf^{-1}\left(\frac{1}{2}\right)$, $\lambda>0$, $\delta>0$, $-\infty<\mu<\infty$, $\sigma>0$, and $erf(x)=\frac{2}{\sqrt{\pi}}\int_0^x e^{-t^2}dt$. 
\end{example}

\begin{remark}
	The bivariate Weibull (in Example \ref{bvweibull}) and the bivariate Gamma (in Example \ref{bvgamma}) reduce to bivariate Exponential distribution for $\delta_1=\delta_2 = 1$, and $\alpha_1=\alpha_2 = 1$, respectively.
\end{remark}

\section{Estimation Methodology}\label{EST}
In a classical parametric setting, a straightforward approach is to estimate the dependence parameter and the parameters associated with the marginals using maximum likelihood method. This method is theoretically valid but there are some practical limitations. Firstly, the estimation of the dependence parameter $\theta$ depends on the parametric assumptions made on the marginals and the estimate of $\theta$ will be biased if the marginals are misspecified. The second drawback is computational as the log-likelihood function involves potentially large number of parameters and high-dimensional optimization is known to be challenging. See \citet[Ch-4]{Mariu_2018} for details. To avoid aforementioned computational burden \citet{Joe_1997} proposed a two stage method known as inference function for margins (IFM). This estimation method is based on two separate maximum likelihood estimations
of the univariate marginal distributions, followed by an optimization of the bivariate likelihood as a function of the dependence parameter. Similar to maximum likelihood estimate, the estimate of $\theta$ based on IFM may be biased if the margins are partially misspecified \citep[p-136]{Mariu_2018}. Although the IFM has computational edge, it is less efficient compared to the maximum likelihood estimate \citep[Ch-5]{Joe_2006}.

We propose to use a method that
close in spirit to the method of inference function for margins (IFM) but avoids the issue with misspecified marginals for the estimation of $\theta$. In contrast to IFM, we do not maximize the bivariate likelihood. Instead, we determine the dependence parameter using method of moments \citep[p-141]{Mariu_2018}. The method of fitting a bivariate distribution with marginals $F_{\eta_{i}}(\cdot)$, indexed by parameter $\eta_{i}$ for $i=1,2$, involves the following steps:
\begin{itemize}
	\item[(i)] Obtain the estimates $\hat{\eta}_{i}$ for $i=1,2$ using maximum likelihood method.
	\item[(ii)] Estimate of $\theta$ is given by $\hat{\theta}=\frac{-\tau_{n}}{1+\tau_{n}}$, or obtained by solving $\rho_{n} =\dfrac{2(3+3\hat{\theta}+\hat{\theta}^2)}{2+3\hat{\theta}+\hat{\theta}^2}-3$, where $\tau_{n}$ and $\rho_{n}$ are sample version of Kendall's $\tau$ and Spearman's $\rho$, respectively.
	\item[(iii)] Obtain the fitted bivaritae distribution by putting $F_{\hat{\eta}_{1}}(\cdot)$, and $G_{\hat{\eta}_{2}}(\cdot)$, and $\hat{\theta}$ in (\ref{Copula}).
\end{itemize}
These steps are easy to execute and familiar to the practitioners of different fields of science. This method allows the copula to adequately approximate the dependence structure of the bivariate data, which is of prime concern from a practical point of view.

\section{Application}\label{Case}
\subsection{Exploratory Data Analysis}\label{EDA}
For an illustrative data analysis based on the proposed copula, we consider a data set on daily air quality measurements for 153 days in the New York Metropolitan Area from May 1, 1973, to September 30, 1973. Information on average wind speed (in miles per hour) and mean ozone level (in parts per billion), were obtained from the New York State Department of Conservation and the National Weather Service, USA. This data set is openly available in R software. See \citet[Ch 2-5]{Air} for the detailed description of the data. Ozone in the upper atmosphere protects the earth from the sun's harmful rays. On the contrary, exposure to ozone also can be hazardous to both humans and some plants in the lower atmosphere. Variations in weather conditions play an important role in determining ozone levels \citep{Metero1, Metero2}. In general, concentration of the ozone level is affected by a wind speed. High winds tend to disperse pollutants, which in turn, dilute the concentration of the ozone level. However, stagnant conditions or light winds allow pollution levels to build up and thereby, the ozone level too becomes larger. Environmental scientists and meteorologists are interested in the study of the effect of a wind speed on the distribution patterns of ozone \citep{Wind} levels. For our analysis, we consider 116 observations discarding the missing values and presented the scatter plot of average wind speed versus ozone levels in Figure \ref{Scatter}. It indicates strong negative dependence, and we find that Spearman's rho and Kendall's tau coefficients are -0.59 and -0.43, respectively. Further, we apply the methodology proposed by \citet{Lu} based on Kolmogorov–Smirnov (KS), Anderson–Darling (AD), and Cramér-vonMises (CvM) discrepancy measures to test the hypothesis if the true underlying copula satisfies the NQD property. The p-values corresponding to KS, AD and CvM tests are 0.893, 0.571, and 0.861, respectively, affirm a strong notion of negative dependence between average wind speed and ozone levels in the NQD sense.

\subsection{Modeling Wind Speed and Ozone Level}\label{Modeling}
In the field of engineering and environmental science, Lognormal, Weibull, and Gamma distributions are widely used for modeling  wind speed recorded in the same location \citep{Wind_1978,Wind_2015,Wind_2017,Wind_2020,Wind_2021}. These distributions are also used for modeling the level of various pollutants and ozone level \citep{OZ_2016,OZ_2018,OZ_2021}. Therefore, we consider these three models for estimation of the parameters associated with the marginal distributions of the wind speed and the mean ozone level. Based on the Akaike information criterion, the Gamma distribution fits both marginals better as compared with other choices. The maximum likelihood estimates of the shape and the scale parameters are obtained as 7.171 and 1.375, respectively, for the wind speed, and the same for the mean ozone levels are 1.7 and 24.775, respectively. The estimate of the dependence parameter is obtained as $\hat{\theta}=0.765$. Therefore, the joint distribution of wind speed and the mean ozone level is represented by the bivariate Gamma distribution provided in Example \ref{bvgamma}, and presented graphically in Figure \ref{CDF_Data}. Following \cite{Bala_2016}, we then use bootstrap based Kolmogorov-Smirnov test to check whether the Gamma distribution is a good fit for the marginals. Also, we evaluate the goodness of fit of the proposed copula based on Kolmogorov-Smirnov statistic utilising the bootstrap algorithm proposed by \citet{Genest_2006}. We find the proposed model fits the data reasonably well. The R programme related to the proposed estimation methodology are provided in the Supplementary material. In Figure \ref{Contour} we present the contour plot of the the distribution of wind speed and mean ozone level. It indicates that the concentration of mean ozone level varies from 6-30 ppb when the wind speed is within 7-16 mph. The estimated conditional distributions of the mean ozone level keeping the wind speed fixed at the empirical first decile (5.7 mph), median (9.7 mph), and ninth decile (14.9 mph) are presented in Figure \ref{CND_CDF}. It is easy to see that the distribution of the mean ozone level decreases stochastically (in the sense of the usual stochastic order) as the wind speed increases. This visual representation of the regression dependence property indicates that the ozone level distributions below the level of 60 ppb differ significantly with wind speed. This can assist in formulating policies and guidelines to choose between locations to avoid health hazards related to high ozone levels.

\begin{figure}[htp]
	\centering
	\subfigure [Scatter plot of wind speed versus ozone.] {
		\includegraphics[width=70mm,height=70mm]{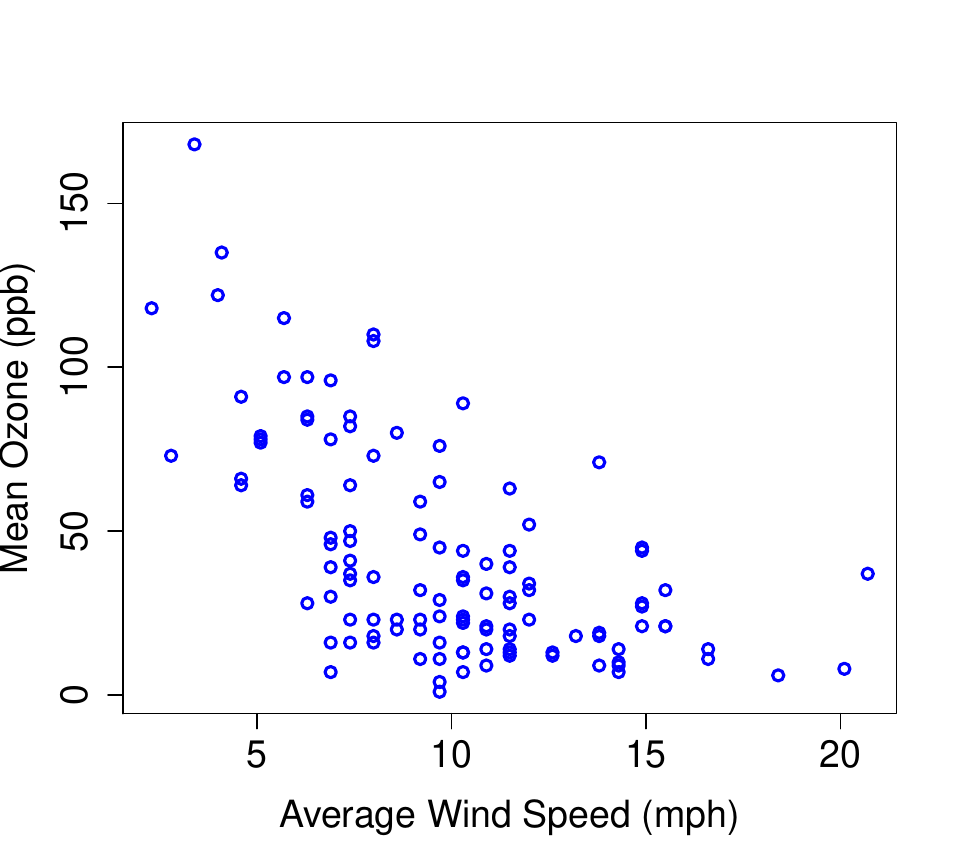}
		\label{Scatter}
	}
	\subfigure[Distribution of wind speed and ozone.]{
		\includegraphics[width=70mm,height=70mm]{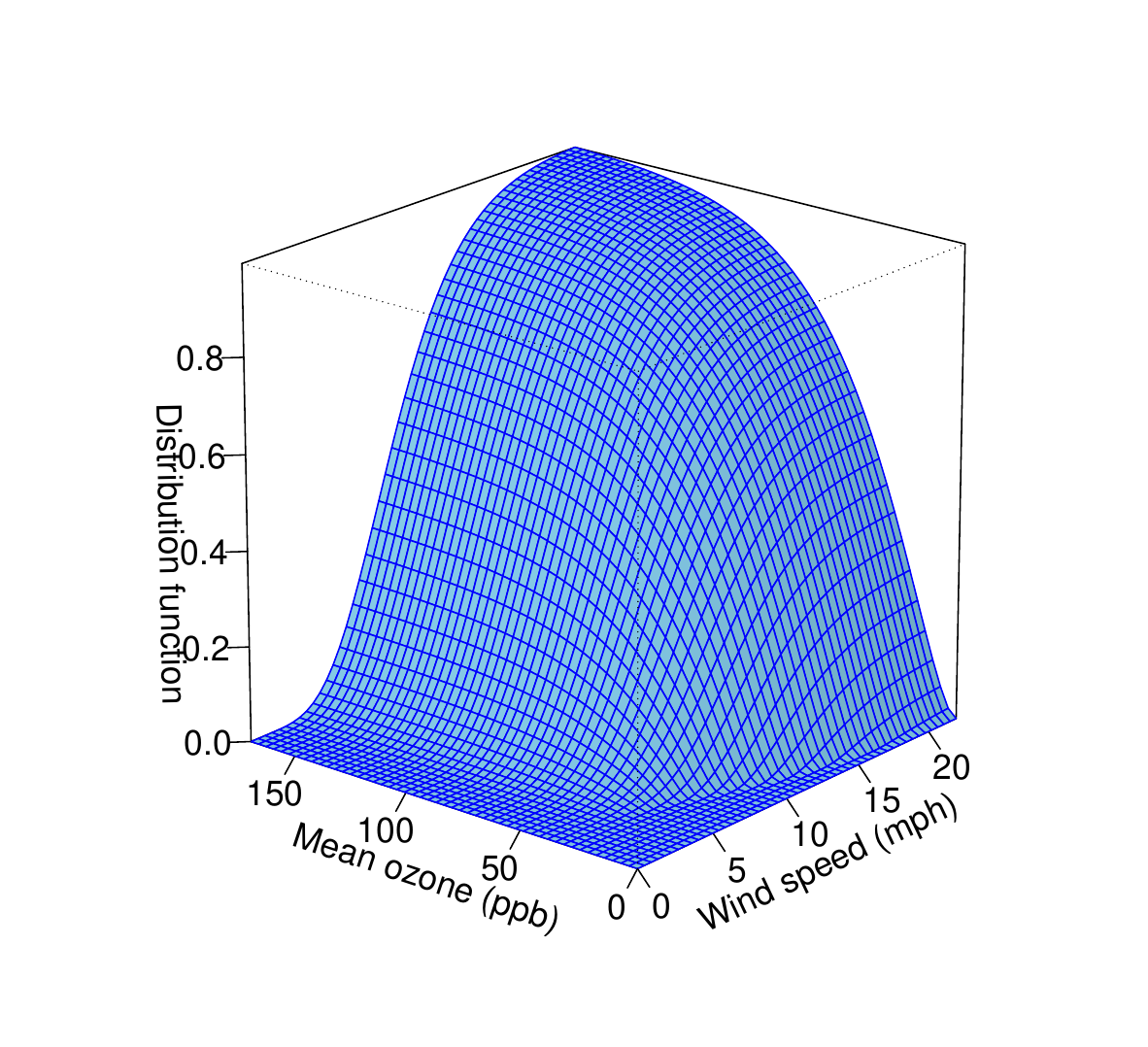}
		\label{CDF_Data}
	}
	\subfigure[Contour plot of wind speed and ozone.]{
		\includegraphics[width=70mm,height=70mm]{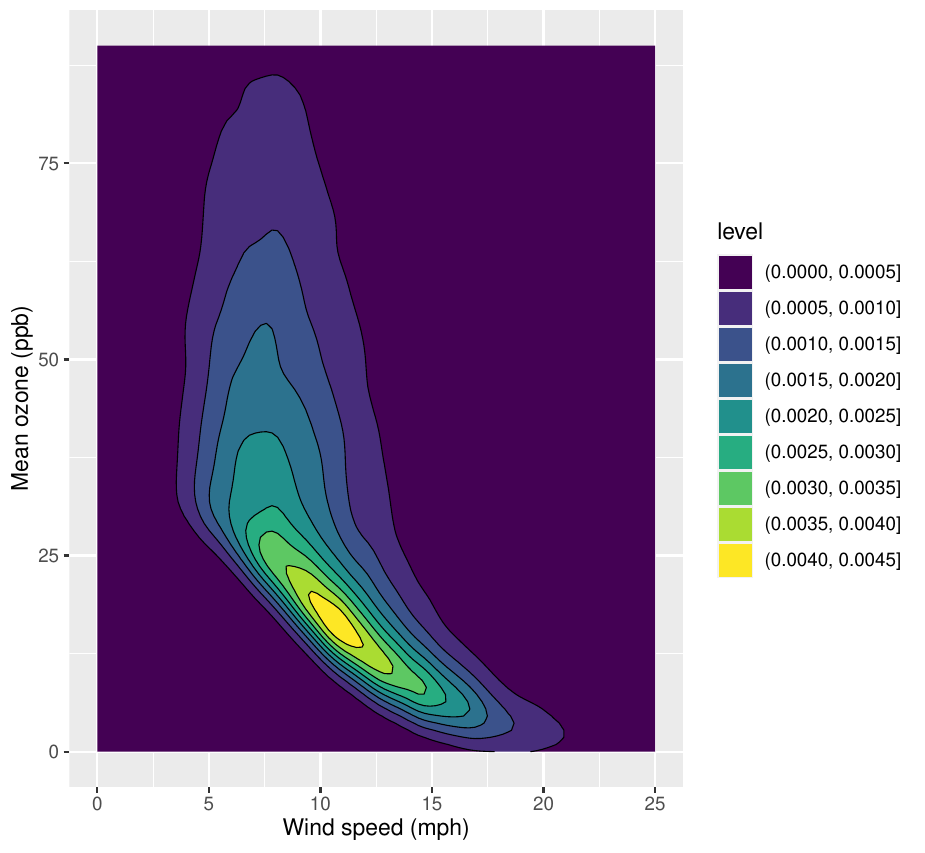}
		\label{Contour}	
	}
	\subfigure[Effect of wind speed on ozone.]{
		\includegraphics[width=70mm,height=70mm]{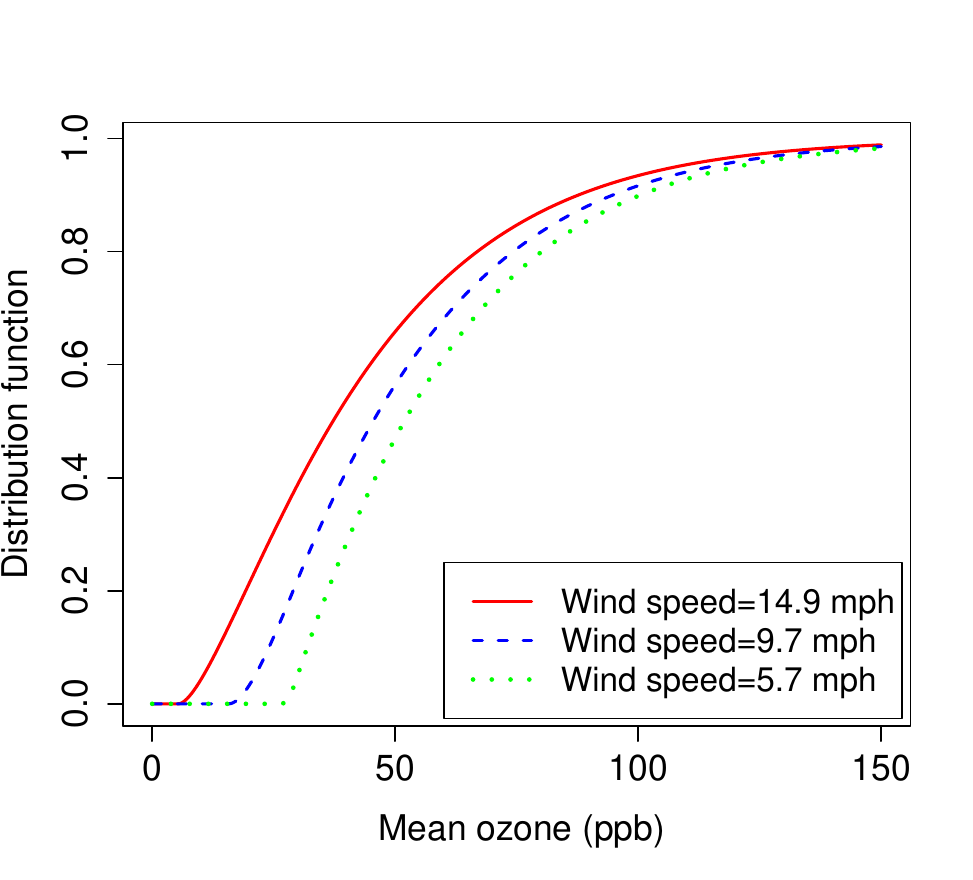}
		\label{CND_CDF}
	}
	\caption{Results based on the analysis of New York air quality data.}
\end{figure}

\section{Concluding Remarks}\label{Con}
We construct the new flexible bivariate copula for modeling negative dependence between two random variables. Its  correlation coefficient takes any value in the interval $(-1,0)$, which was not the case for other copulas reported in the literature. It is important to note that the Spearman's rho and the Kendall's tau have a simple one-parameter form with negative values in the full range. The properties of the proposed copula is an agreement with  most of the popular notions of negative dependence available in the literature, namely quadrant Dependence, regression dependence and likelihood ratio dependence, etc. It is an interesting problem to consider a semi-parametric generalisation of the proposed copula and investigate its associated properties. Another possible direction of future research could be a multivariate extension of the proposed copula using the approaches considered by \cite{Fischer_Kock_2008} and \citep{Mazo_et_al_2015}.

For an illustrative data analysis based on the proposed copula, we consider a data set on daily air quality measurements for New York Metropolitan Area. Based on the observed data, we find that wind speed and ozone levels strongly dependent in the NQD sense. We consider three different models (Lognormal, Weibull, and Gamma distributions) for estimation of  parameters associated with the marginal distributions of the wind speed and the  mean ozone level. It is shown that the Gamma distributions fits better for both marginals and that the distribution of the mean ozone level decreases stochastically (in the sense of the usual stochastic order) as the wind speed increases. The scope of the proposed copula goes far beyond this particular application. For example, biomedical researchers can utilize the proposed copula in studying the negative association between BMI and glycated proteins \citep{BMI}. One can also extend the proposed copula to asses and model the nonlinear and asymmetric negative dependence over time in security and commodity markets \citep{Market}.




	\section*{Appendix A }\label{secA}

	{\bf A1. Proof of Proposition \ref{pro21}.}
	
	{\bf Case I.} For $0<v \leq\frac{\theta}{1+\theta}$, and $1-\frac{(1+\theta)v}{\theta}<u<1$, we have
	\begin{eqnarray*}
		\dfrac{\partial C_\theta}{\partial \theta} &=& \dfrac{\theta^\theta}{(1+\theta)^{(1+\theta)}}(1-u)^{(1+\theta)}v^{-\theta}\left[ \log\left(\dfrac{\theta}{1+\theta}\right) +\log(1-u)-\log(v) \right]\\
		&\leq& \dfrac{\theta^\theta}{(1+\theta)^{(1+\theta)}}(1-u)^{(1+\theta)}v^{-\theta}\left[ \log\left(\dfrac{\theta}{1+\theta}\right) +\log\left[\dfrac{(1+\theta)v}{\theta}\right]-\log(v) \right],\\
		&&\hspace{200pt}\text{since $(1-u)\leq \dfrac{(1+\theta)v}{\theta}$}\\
		&=&0
	\end{eqnarray*}

	{\bf Case II.} For $0<u<1$, and $\frac{\theta}{1+\theta}<v<1$, we have
	$$\dfrac{\partial C_\theta}{\partial \theta}= (1-u)^{(1+\theta)}(1-v)\log(1-u)\leq 0.$$
	
	Now combining Case I and II, we have $\dfrac{\partial C_\theta}{\partial \theta}\leq 0$ for all $(u,v)\in I^2$, which implies $C_{\theta}$ is decreasing in $\theta$.
	\vspace{5pt}
	\\
	{\bf A2. Proof of Proposition \ref{pro22}.}
	
	{\bf Case I.} For $0<v \leq\frac{\theta}{1+\theta}$, and $1-\frac{(1+\theta)v}{\theta}<u<1$, we have 
	\begin{eqnarray*}
		\nabla^2 C_\theta(u,v)& =& \dfrac{\partial^2 C_\theta(u,v)}{\partial u^2} + \dfrac{\partial^2 C_\theta(u,v)}{\partial v^2}\\
		&= & \dfrac{\theta^{(1+\theta)}}{(1+\theta)^\theta}\left[(1 - u)^{(\theta-1)} v^{-\theta} + (1-u)^{(1+\theta)}v^{-(2+\theta)}\right]\geq 0
	\end{eqnarray*}
	{\bf Case II.} For $0<u<1$, and $\frac{\theta}{1+\theta}<v<1$, we have
	\begin{eqnarray*}
		\nabla^2 C_\theta(u,v)& =& \dfrac{\partial^2 C_\theta(u,v)}{\partial u^2} + \dfrac{\partial^2 C_\theta(u,v)}{\partial v^2}\\
		&= & \theta (1 + \theta) (1 - u)^{(\theta-1)} (1 - v)\geq 0
	\end{eqnarray*}
	Now from Case I and II we can write $\nabla^2 C_\theta(u,v) \geq 0$ for all $(u,v)\in I^2$, and hence the result follows.
	\vspace{5pt}
	\\
	{\bf A3. Proof of Proposition \ref{pro23}.}
	
	To establish the absolute continuity of the proposed copula $C_{\theta}$, it is required to show  $$\int_0^u\int_0^v\dfrac{\partial^2}{\partial s\partial t}C_\theta(s,t)dtds =C_\theta(u,v),$$
	for every $(u,v)\in I^2$.\\
	\\
	{\bf Case I.} For $0<v \leq\frac{\theta}{1+\theta}$, and $1-\frac{(1+\theta)v}{\theta}<u<1$, we have
	\begin{eqnarray*}
		\int_0^u\int_0^v\dfrac{\partial^2}{\partial s\partial t}C_\theta(s,t)dtds &=& \int_{1-\frac{(1+\theta)v}{\theta}}^u\int_{\frac{\theta(1-s)}{(1+\theta)}}^v\dfrac{\theta^{1+\theta}}{(1+\theta)^\theta}(1-s)^\theta t^{-(1+\theta)}dtds\\
		&=& \int_{1-\frac{(1+\theta)v}{\theta}}^u\left[1-\left(\frac{\theta}{1+\theta}\right)^\theta(1-s)^\theta v^{-\theta}\right] ds\\
		&=&\int_{1-u}^{\frac{(1+\theta)v}{\theta}}\left[1-\left(\frac{\theta}{1+\theta}\right)^\theta z^\theta v^{-\theta}\right] dz \,\,\,({\rm where}\,\, z = 1-s)\\
		&=& v-(1-u)+\dfrac{\theta^\theta}{(1+\theta)^{1+\theta}}(1-u)^{1+\theta}v^{-\theta}=C_\theta(u,v). 
	\end{eqnarray*}  
	{\bf Case II.} For $0<u<1$, and $\frac{\theta}{1+\theta}<v<1$, we have    
	\begin{eqnarray*}
		\int_0^u\int_0^v\dfrac{\partial^2}{\partial s\partial t}C_\theta(s,t)dtds &=& \int_0^u\int_{\frac{\theta(1-s)}{(1+\theta)}}^{\frac{\theta}{(1+\theta)}}\dfrac{\theta^{1+\theta}}{(1+\theta)^\theta}(1-s)^\theta t^{-(1+\theta)}dtds\\&+&\int_0^u\int_{\frac{\theta}{(1+\theta)}}^v(1+\theta)(1-s)^\theta dtds\\
		&=& \int_0^u\left[1-(1-s)^\theta\right] ds + \int_0^u \left[ v-\theta(1 - v)\right](1 - s)^\theta\\
		&=& u-\dfrac{1}{1+\theta}+\dfrac{(1-u)^{\theta+1}}{\theta + 1}+ \left[ v-\theta(1 - v)\right]\dfrac{\left[1-(1-u)^{(\theta+1)}\right]}{1+\theta}\\
		&=& u-(1-v)\left[1-(1-u)^{(\theta+1)}\right]=C_\theta(u,v). 
	\end{eqnarray*} 
	Therefore, the results follows by combining Case I and II.

	
	
	\section*{Appendix B}\label{secB}%

	{\bf B1. Proof of Theorem \ref{thm6}.}
	
	(i) To establish LTI($Y\mid X$), it is sufficient to show that for any $v$ in $I$, $\frac{C(u,v)}{u}$ is nondecreasing in $u$ \citep[Theorem 5.2.5, p-192]{Nelson_2006}.
	For $0<u<1$, and $\frac{\theta}{1+\theta}<v<1$, we have $$\dfrac{\partial}{\partial u}\left[\dfrac{C(u,v)}{u}\right]=\dfrac{(1-v)[1-(1-u)^\theta(1+\theta u)]}{u^2}.$$ Now we need to prove that $[1-(1-u)^\theta(1+\theta u)]>0$. Define $h(u):=(1-u)^\theta(1+\theta u)$. Observe that $h(0)=1$, $h(1)=0$, and $h(u)$ is a decreasing function in $u$, since  $h^{'}(u)=-\theta^2(1+\theta)u(1-u)^{(\theta-1)}<0$ for all  $u\in(0,1)$. Therefore, $\frac{\partial}{\partial u}\left[\frac{C(u,v)}{u}\right]>0$.\\ 
	
	Similarly, 
	for $0<v \leq\frac{\theta}{1+\theta}$, and $1-\frac{(1+\theta)v}{\theta}<u<1$, it can be shown that
	$$\dfrac{\partial}{\partial u}\left[\dfrac{C(u,v)}{u}\right]=- \dfrac{\frac{\theta^\theta}{(1+\theta)^{(1+\theta)}}(1+\theta u)(1-u)^\theta +v^{(1+\theta)}-v^\theta}{ u^2v^\theta}>0.$$
	Hence, the result follows.\\
	
	(ii) In view of Theorem 5.2.5 in \cite[p-192]{Nelson_2006}, the necessary and sufficient condition for LTI($X\mid Y$) is that, $\frac{C(u,v)}{v}$ is nondecreasing in $v$, for any $u$ in $I$.
	
	For $0<v \leq\frac{\theta}{1+\theta}$, and $1-\frac{(1+\theta)v}{\theta}<u<1$, we have
	$$\dfrac{\partial}{\partial v}\left[\dfrac{C(u,v)}{v}\right]= \dfrac{(u-1)\left[\frac{\theta^\theta}{(1+\theta)^{\theta}}(1-u)^\theta- v^{\theta}\right]}{v^{\theta+2}}\geq 0,$$ since $(u-1)<0$ and $\left[\frac{\theta^\theta}{(1+\theta)^{\theta}}(1-u)^\theta- v^{\theta}\right]<0$. \\
	
	Similarly, for $0<u<1$, and $\frac{\theta}{1+\theta}<v<1$, we have $$\dfrac{\partial}{\partial v}\left[\dfrac{C(u,v)}{v}\right]=\dfrac{(u-1)[(1-u)^\theta-1]}{v^2}\ge 0.$$
	Hence, the result follows.\\
	
	(iii) 
	To establish RTD($Y\mid X$), it is sufficient to show that $\frac{v-C(u,v)}{(1-u)}$ is a nondecreasing function in $u$ for any $v\in I$ \citep[Theorem 5.2.5, p-192]{Nelson_2006}.\\
	
	For $0<v \leq\frac{\theta}{1+\theta}$ and ${1-\frac{(1+\theta)v}{\theta}<u<1}$,  we have $$\dfrac{\partial}{\partial u}\left[\dfrac{v-C(u,v)}{(1-u)}\right]= \left(\dfrac{\theta}{1+\theta}\right)^{1+\theta}(1-u)^{\theta-1}v^{-\theta}>0.$$\\
	
	Similarly, for $0<u<1$, and  $\dfrac{\theta}{1+\theta}<v<1$, we have $$\dfrac{\partial}{\partial u}\left[\dfrac{v-C(u,v)}{(1-u)}\right]= (1-v)(1-u)^{\theta -1}>0.$$  Hence, the conclusion follows.\\

	(iv) By Theorem 5.2.5 in \cite[p-192]{Nelson_2006}, RTD($Y\mid X$) holds, if $\frac{u-C(u,v)}{(1-v)}$ is a nondecreasing function in $v$ for any $u\in I$. 
	
	For $0<v \leq\frac{\theta}{1+\theta}$ and $1-\frac{(1+\theta)v}{\theta}<u<1$, we have $$\dfrac{\partial}{\partial v}\left[\dfrac{u-C(u,v)}{(1-v)}\right]=\dfrac{\frac{\theta^\theta}{(1+\theta)^{(1+\theta)}}(1-u)^{1+\theta}v^{-(1+\theta)}[\theta(1-v)-v]}{(1-v)^2},$$ which is non-negative, since $v <\frac{\theta}{1+\theta}$.\\
	
	Similarly, for any fixed $u\in I$, and  $\frac{\theta}{1+\theta}<v<1$, $\frac{u-C(u,v)}{(1-v)}=1-(1-u)^{1+\theta}$ is a constant function in $v$. Hence the results follows.
	\vspace{5pt}
	\\
	{\bf B2. Proof of Theorem \ref{thm7}.}

	To establish SD($Y\mid X$) property of the proposed copula $C_{\theta}$, we utilise the geometric interpretation of the stochastic monotonicity given in Corollary 5.2.11 of \cite[p-197]{Nelson_2006}. Therefore, it is sufficient to show that $C_\theta(u,v)$ is a convex function of $u$. Similarly, SD($X\mid Y$) can be established by showing $C_\theta(u,v)$ is a convex function of $v$.\\
	
	(i)  For $0<v \leq\frac{\theta}{1+\theta}$, and $1-\frac{(1+\theta)v}{\theta}<u<1$, we have $$\dfrac{\partial^2}{\partial u^2}C_\theta(u,v) = \dfrac{\theta^{(1+\theta)}}{(1+\theta)^\theta}(1-u)^{\theta-1}v^{-\theta}>0.$$
	
	For $0<u<1,$ and  $\frac{\theta}{1+\theta}<v<1$, we have 
	$$\dfrac{\partial^2}{\partial u^2}C_\theta(u,v) =\theta (1+\theta)(1-v)(1-u)^{(\theta -1)}> 0.$$ Hence $C_\theta(u,v)$ is a convex function of $u$.\\  
	
	(ii)  For $0<v \leq\frac{\theta}{1+\theta}$, and $1-\frac{(1+\theta)v}{\theta}<u<1$, we have $$\dfrac{\partial^2}{\partial v^2}C_\theta(u,v) ={ \dfrac{\theta^{(1+\theta)}}{(1+\theta)^\theta}(1-u)^{1+\theta}v^{-(2+\theta)}}>0.$$
	
	Note that, for any fixed $u\in I$, and $\frac{\theta}{1+\theta}<v<1$,  $\frac{\partial}{\partial v}C_\theta(u,v)$ is a constant function of $v$. Hence, the result follows.
	\vspace{5pt}
	\\     
	{\bf B3. Proof of Theorem \ref{NLR_thm}.}
	
	To established the NLR between $X$ and $Y$ with copula $C_\theta$, we need to show $c_\theta(u_1,v_1)c_\theta(u_2,v_2)\leq c_\theta(u_1,v_2)c_\theta(u_2,v_1)$ holds for all $u_1\leq u_2$, and $v_1\leq v_2$, where $c_\theta(u,v)$ is the copula density given in (\ref{eqn_den}). Note that for the proposed copula $C_{\theta}$, the aforementioned condition holds with equality for all $u_1\leq u_2$ and $v_1\leq v_2$ in $I$.

	\section*{Appendix C}\label{secC}%
	{\bf C1. Proof of Theorem \ref{thm9}.}
	
	The results directly follow from Proposition \ref{pro21}. 
	\vspace{5pt}
	\\
	{\bf C2. Proof of Theorem \ref{thm10}.}
	
	Let $\theta_1\leq \theta_2$. The conditional copula of $V$ given $U=u$ is given by
	\begin{equation*}
		C_{\theta_1}(v\mid u)=\begin{dcases}
			1- \dfrac{\theta_1^{\theta_1}}{(1+\theta_1)^{\theta_1}}(1-u)^{\theta_1} v^{-\theta_1},  & \dfrac{(1-u)\theta_1}{(1+\theta_1)}<v <\dfrac{\theta_1}{1+\theta_1} \\
			1-(1+\theta_1)(1-v)(1-u)^{\theta_1}, &  \dfrac{\theta_1}{1+\theta_1}<v<1.
		\end{dcases}
	\end{equation*}
	Then $C_{\theta_2}^{-1}(C_{\theta_1}(v\mid u)\mid u)$ is given by    
	\begin{equation*}
		C_{\theta_2}^{-1}(C_{\theta_1}(v\mid u)\mid u) =\begin{dcases}
			\dfrac{\theta_2(1+\theta_1)^{(\theta1/\theta_2)}}{(1+\theta_2)\theta_1^{(\theta1/\theta_2)}}(1-u)^{1-{(\theta1/\theta_2)}}v^{(\theta1/\theta_2)},  & 0<v\leq1-(1-u)^{\theta_2}
			\\
			1-\dfrac{1+\theta_1}{1+\theta_2}(1-v)(1-u)^{(\theta1-\theta_2)}, &  1-(1-u)^{\theta_2}<v< 1.
		\end{dcases}
	\end{equation*}
	Note that $ C_{\theta_2}^{-1}(C_{\theta_1}(v\mid u)\mid u)$ is a decreasing function in $u$ as $\theta_1\leq \theta_2$. Now, using Definition \ref{def42}, the result follows.  
	\vspace{5pt}
	\\
	{\bf C3. Proof of Theorem \ref{thm11}.}
	
	Let $\theta_1\leq \theta_2$. Now, it is easy to verify that the condition provided in Definition \ref{def_NLR} holds for any choice of $u_1, u_2, v_1, v_2$, where $u_1\leq u_2$, $v_1\leq v_2$.

\section*{Acknowledgement}
The first author sincerely acknowledges the financial support from the University of the Free State, South Africa.
The second author was supported in part by
the Lloyd’s Register Foundation programme on data-centric engineering at the Alan Turing Institute, UK.

\section*{Supplementary Information}
Supplementary material is openly available at \doi{10.13140/RG.2.2.31152.74247}.



\bibliographystyle{apalike}
\bibliography{Prajamitra_ref}

\end{document}